\documentclass[11pt]{article}
\usepackage[utf8]{inputenc}
\usepackage{amsmath}
\usepackage{amssymb}
\usepackage{amsthm}
\usepackage{tikz}
\usepackage{xcolor}
\usepackage{fullpage}
\usepackage{url}
\usepackage{xfrac}
\usepackage{appendix}
\usepackage{algorithm}
\usepackage{hyperref}

\usepackage{algpseudocode}
\usepackage{natbib}
\hypersetup{
	colorlinks=true,       
	linkcolor=blue,          
	citecolor=blue,        
	filecolor=blue,      
	urlcolor=blue           
}

\numberwithin{equation}{section}
\numberwithin{figure}{section}

\allowdisplaybreaks
\theoremstyle{plain}
\newtheorem{lemma}{Lemma}[section]
\newtheorem{proposition}[lemma]{Proposition}

\newtheorem{theorem}[lemma]{Theorem}

\theoremstyle{definition}
\newtheorem{definition}[lemma]{Definition}
\newtheorem{remark}[lemma]{Remark}

\newtheorem{example}[lemma]{Example}

\definecolor{pythonblue}{RGB}{31 119 180}
\definecolor{pythongreen}{RGB}{44 160 44}
\definecolor{pythonorange}{RGB}{255 127 14}

\newcommand{\dotlab}[2]{\fill (#1) circle (1.1pt); \node[right,inner sep=1.5pt,font=\tiny,text=black] at (#1) {$#2$}}

\newcommand{\treeAlpha}[3]{\begin{scope}[shift={#1},scale=#2,line width=1pt,color=#3]
		\draw (0,0.918)|-(0.35,1.46); \draw (0,0.918)|-(0.7,0.375);
		\draw (0.35,1.46)|-(0.9,1.725); \draw (0.35,1.46)|-(1.4,1.2);
		\draw (0.9,1.725)|-(1.4,1.9); \draw (0.9,1.725)|-(1.4,1.55);
		\draw (0.7,0.375)|-(1.4,0.55); \draw (0.7,0.375)|-(1.4,0.2);
		\dotlab{1.4,1.9}{a};\dotlab{1.4,1.55}{b};\dotlab{1.4,1.2}{c};\dotlab{1.4,0.55}{d};\dotlab{1.4,0.2}{e};
\end{scope}}

\newcommand{\treeBeta}[3]{\begin{scope}[shift={#1},scale=#2,line width=1pt,color=#3]
		\draw (0,0.918)|-(0.35,1.46); \draw (0,0.918)|-(0.7,0.375);
		\draw (0.35,1.46)|-(0.9,1.725); \draw (0.35,1.46)|-(1.4,1.2);
		\draw (0.9,1.725)|-(1.4,1.9); \draw (0.9,1.725)|-(1.4,1.55);
		\draw (0.7,0.375)|-(1.4,0.55); \draw (0.7,0.375)|-(1.4,0.2);
		\dotlab{1.4,1.9}{a};\dotlab{1.4,1.55}{c};\dotlab{1.4,1.2}{b};\dotlab{1.4,0.55}{d};\dotlab{1.4,0.2}{e};
\end{scope}}

\newcommand{\treeGamma}[3]{\begin{scope}[shift={#1},scale=#2,line width=1pt,color=#3]
		\draw (0,1.10)|-(0.35,1.64); \draw (0,1.10)|-(0.7,0.375);
		\draw (0.35,1.64)|-(1.4,1.9); \draw (0.35,1.64)|-(0.9,1.375);
		\draw (0.9,1.375)|-(1.4,1.55); \draw (0.9,1.375)|-(1.4,1.2);
		\draw (0.7,0.375)|-(1.4,0.55); \draw (0.7,0.375)|-(1.4,0.2);
		\dotlab{1.4,1.9}{a};\dotlab{1.4,1.55}{b};\dotlab{1.4,1.2}{c};\dotlab{1.4,0.55}{d};\dotlab{1.4,0.2}{e};
\end{scope}}

\newcommand{\treeSigma}[3]{\begin{scope}[shift={#1},scale=#2,line width=1pt,color=#3]
		\draw (0,0.9)|-(0.35,1.55);            
		\draw (0,0.9)|-(0.7,0.375);            
		\draw (0.35,1.55)|-(1.4,1.9);          
		\draw (0.35,1.55)|-(1.4,1.55);
		\draw (0.35,1.55)|-(1.4,1.2);
		\draw (0.7,0.375)|-(1.4,0.55);         
		\draw (0.7,0.375)|-(1.4,0.2);
		\dotlab{1.4,1.9}{a};\dotlab{1.4,1.55}{b};\dotlab{1.4,1.2}{c};\dotlab{1.4,0.55}{d};\dotlab{1.4,0.2}{e};
		\node[text=black] at (0.16,1.78) {\tiny$\ast$};   
\end{scope}}

\newcommand{\Rbb}{\mathbb{R}}
\newcommand{\Zbb}{\mathbb{Z}}
\newcommand{\Nbb}{\mathbb{N}}
\newcommand{\diff}{\,\textnormal{d}}
\DeclareSymbolFont{bbold}{U}{bbold}{m}{n}
\DeclareSymbolFontAlphabet{\mathbbold}{bbold}
\newcommand{\ind}{\mathbbold{1}}

\newcommand{\Pbb}{\mathbb{P}}
\newcommand{\Ebb}{\mathbb{E}}
\newcommand{\F}{\mathcal{F}}

\renewcommand{\S}{\mathcal{S}}
\newcommand{\B}{\mathcal{B}}
\renewcommand{\P}{\mathcal{P}}
\newcommand{\TT}{\mathcal{T}}

\begin{document}
		
	\title{Phylogenetic Inference and the Stickiness of Fréchet Means, via Precise Asymptotics of an Embedded Random Walk}
	
	\author{Adam Quinn Jaffe\thanks{Department of Statistics, Columbia University, New York, NY, \texttt{a.q.jaffe@columbia.edu}}}

	\date{}
			
	\maketitle
	
	\begin{abstract}
		A well-known phenomenon in statistical analyses of populations of phylogenetic trees in the Billera-Holmes-Vogtmann space is that the topology of the Fr\'echet mean tree can contain multifurcations (i.e., internal nodes with more than two children), which raises the practical question of whether this reflects a population-level branching structure (hard polytomy) or merely sampling variability in the data (soft polytomy).
		This is an instance of the more general phenomenon of ``stickiness'' in non-Euclidean statistics, whereby the sample Fr\'echet mean in certain non-positively curved stratified spaces becomes permanently trapped in a lower-dimensional stratum.
		In this work, we identify a particular multidimensional random walk embedded within the Fr\'echet mean process, and we show that the time at which stickiness occurs is determined by the largest last-passage time above zero of the coordinates of this random walk.		
		Using this representation, we develop a fully nonparametric procedure for estimating the probability that trifurcations in a sample Fr\'echet mean tree will bifurcate at some future time if more observations are collected.
		Lastly, we apply our methodology to a problem in phylogenetics where we consider whether an observed trifurcation in the species tree of primates, glires, and tree shrews is genuinely trifurcated at the population level.
	\end{abstract}
	
	\medskip
	
	\small
	
	\textbf{\textit{Keywords:}} Billera-Holmes-Vogtmann treespace, computational phylogenetics, Fr\'echet means, last passage times, non-Euclidean statistics, open books, random walks, stickiness
		
	\medskip
	
	\small
	\textbf{\textit{2020 MSC:}} 57N80, 60G50, 62R20, 92D15

	\normalsize

	\section{Introduction}
	
	\begin{figure}[t]
		\centering
		\includegraphics[scale=0.75]{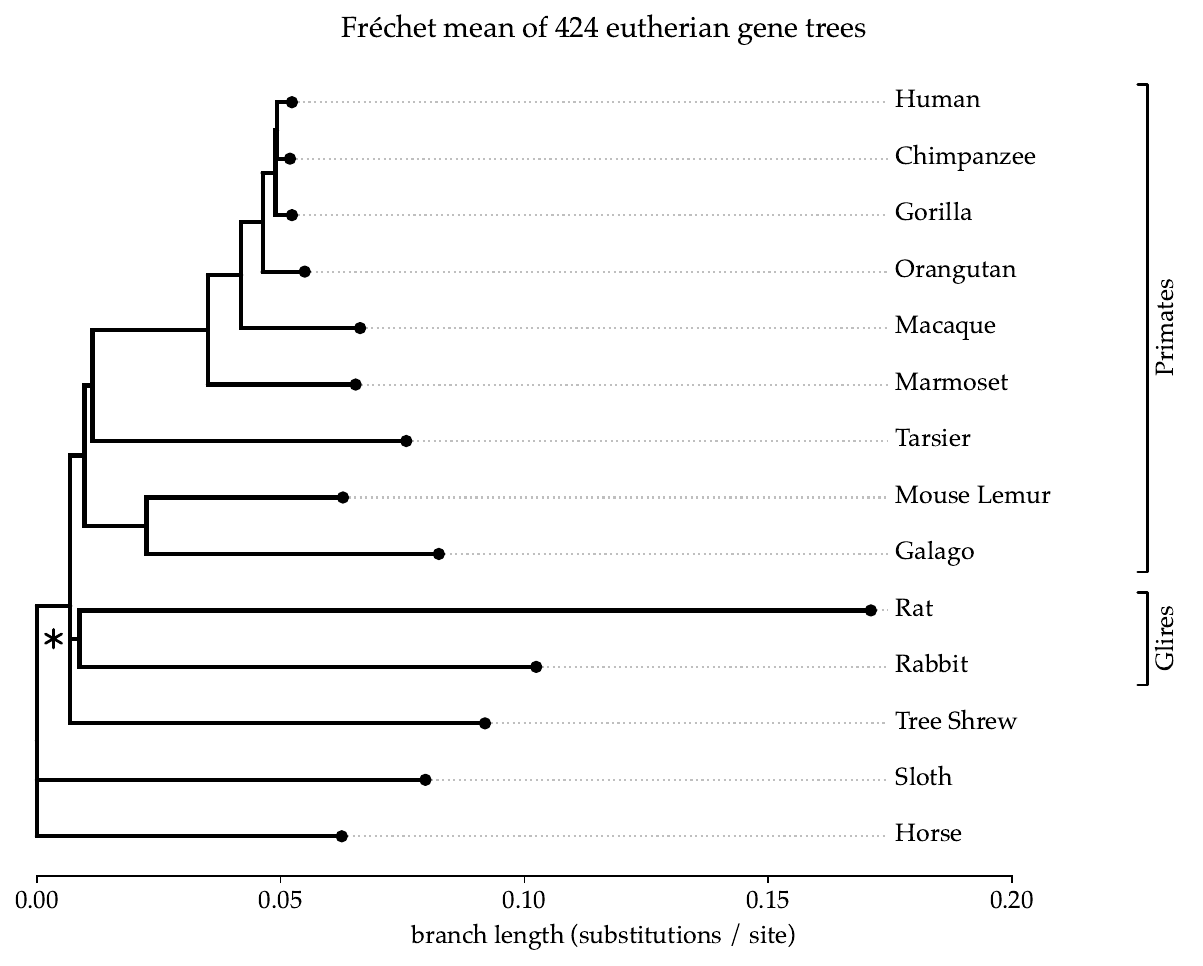}
		\caption{The Fr\'echet mean of many gene trees for a group of eutherian mammals.
		We construct gene trees for $14$ different mammals across $424$ different regions of their genomes, and we show their Fr\'echet mean tree in the Billera-Holmes-Vogtmann treespace.
		The Fr\'echet mean tree has a trifurcation among $\{\textnormal{Tree Shrew, Glires, Primates}\}$ denoted by $\ast$.}
		\label{fig:primate_mean}
	\end{figure}

In the field of non-Euclidean statistics \cite{Huckemann2015,dryden2009non}, an important object of study is the Fr\'echet mean, which provides a canonical notion of central tendency for non-Euclidean data by generalizing the notions of sample mean and expectation to metric spaces which may lack a vector space structure.
For Fr\'echet means of independent identically-distributed (i.i.d.) samples, much is known about their convergence, in a probabilistic sense, to a population counterpart, including laws of large numbers \cite{Sverdrup, Ziezold, Schoetz2, EvansJaffeSLLN, JaffeInfDim}, central limit theorems \cite{ManifoldsI, ManifoldsII, FrechetCLTs}, large deviations principles \cite{JaffeSantoroLDP}, and various concentration inequalities \cite{AGP, Schoetz1}.

One of the motivating examples for this field is the Billera-Holmes-Vogtmann (BHV) treespace \cite{BHV}, which is a non-positively curved geometry on the set of phylogenetic trees with a fixed set of leaves.
(See Subsection~\ref{subsec:BHV-def} for precise definitions.)
This example is of interest in biology since the Fr\'echet mean in BHV treespace provides a natural method for aggregating a collection of gene trees into a single estimated species tree, which was one of the original goals of the seminal work \cite{BHV}.
However, Fr\'echet means in BHV treespace are known to exhibit some peculiar phenomena; for instance, Figure~\ref{fig:primate_mean} shows the Fr\'echet mean tree of $424$ gene trees for $14$ eutherian mammals, which exhibits a trifurcation among \{Tree Shrew, Glires, Primates\}.
It is natural to ask whether an empirical trifurcation reflects a genuinely unresolved relationship at the population level, or merely sampling variability (referred to as \textit{hard} and \textit{soft polytomy}, respectively, in biology \cite{SayyariMirarab}).
This question is especially salient for the trifurcation above, given the ongoing debate in evolutionary biology about whether tree shrews are more closely related to glires or to primates (e.g., \cite{SongMammals,TreeShrewsPosition,LinShrews}).
Our goal in this paper is to develop statistical tools for inference related to trifurcations like those above.

Trifurcations in BHV treespace are an instance of a more general phenomenon in non-Euclidean statistics known as \textit{stickiness}, whereby the sample Fr\'echet mean in certain spaces of negative curvature becomes eventually trapped in a subset of low dimension.
This contrasts with the behavior in Euclidean spaces, where the sample mean converges to the population mean with non-trivial (usually Gaussian) fluctuations.
Stickiness has been studied in many works (e.g., \cite{Kale,UlmerVanHuckemann,StickyOpenBooks}), usually in the setting of \textit{stratified spaces} which, roughly speaking, are defined by gluing together Riemannian manifolds along submanifolds of smaller dimension; besides BHV treespaces, other prominent examples of stratified spaces in non-Euclidean statistics include spaces of covariance matrices \cite{Takatsu,dryden2009non} and Kendall's shape spaces \cite{Huckemann,le2000frechet}.
As explained in \cite[p. 2240]{StickyOpenBooks}, when the population Fr\'echet mean in a stratified space is separated from all singularities, its asymptotic theory is essentially equivalent to that of Fr\'echet means on Riemannian manifolds; if instead the population Fr\'echet mean lies on a singularity, then qualitatively new phenomena like stickiness can occur.

In the mathematical developments of this work, we consider the setting of \textit{open books}, roughly defined as follows.
(See Subsection~\ref{subsec:def} for precise definitions.)
Fix $K\ge 3$ and $m\ge 1$ and let $\B_{K,m}$ denote the metric space resulting from gluing together $K$-many copies of $[0,\infty)\times\Rbb^{m-1}$ along their common boundary $\S$.
Then suppose that $Y_1,\ldots, Y_n$ are i.i.d. samples from some probability measure $\mu$ on $\B_{K,m}$, and let $F_n$ denote the Fr\'echet mean of $Y_1,\ldots, Y_n$ for all $n\in\Nbb$.
Stickiness is precisely the statement that the random variable $T:=\inf\{n\in\Nbb: F_m\in\S \textnormal{ for all }m\ge n\}$ is finite almost surely, and our goal is to understand finer details of the random variable $T$.
(Note that the random time $T$ is not a stopping time with respect to the filtration generated by $Y_1,Y_2,\ldots$, which partially explains why analyzing $T$ is difficult.)
Open books capture the local geometry of BHV treespaces and other stratified spaces near their codimension-one strata, so studying stickiness in the setting of open books is a standard reduction in the literature.

The first main result of the paper identifies the precise asymptotics of $\Pbb(T>n)$ in a general setting.
That is, under some mild conditions, we have (Theorem~\ref{thm:sticking-time-tail}), as $n\to\infty$,
\begin{equation*}
	\frac{c}{\sqrt{n}}(1+o(1))e^{-\alpha n}\le \Pbb(T>n) \le \frac{C}{\sqrt{n}}(1+o(1))e^{-\alpha n},
\end{equation*}
where $c,C,\alpha>0$ are constants that depend only on $\mu$, and which we explicitly calculate in several concrete examples.
This result is sharp in the sense that we identify the dependence of $\Pbb(T>n)$ on $n$ to be precisely of order $n^{-1/2}e^{-\alpha n}$; moreover, the constants $c$ and $C$ are explicit, and when the rate $\alpha$ is attained on a single page they differ only by a factor of the form $\theta/(\theta-\gamma)$ for two explicit parameters $0<\gamma<\theta$ determined by $\mu$.

The second main result is a method for estimating $\Pbb_{\mu}(T>n\,|\,Y_1,\ldots,Y_n)$ from $Y_1,\ldots, Y_n$ when $\mu$ is unknown; in other words, we aim to estimate the probability that stickiness has not yet occurred, given the data observed so far.
More precisely, we define a fully nonparametric procedure (Algorithm~\ref{alg:unstick}) which we prove (Proposition~\ref{prop:alg-consistency}) achieves a vanishing relative error of the order $O_{\Pbb}(n^{-1/2})$.
In practice this estimator is downwardly-biased, so we also define a bootstrap bias-corrected version of the procedure which appears to have slightly better performance.

	\begin{figure}[h!]
		\includegraphics[scale=0.75]{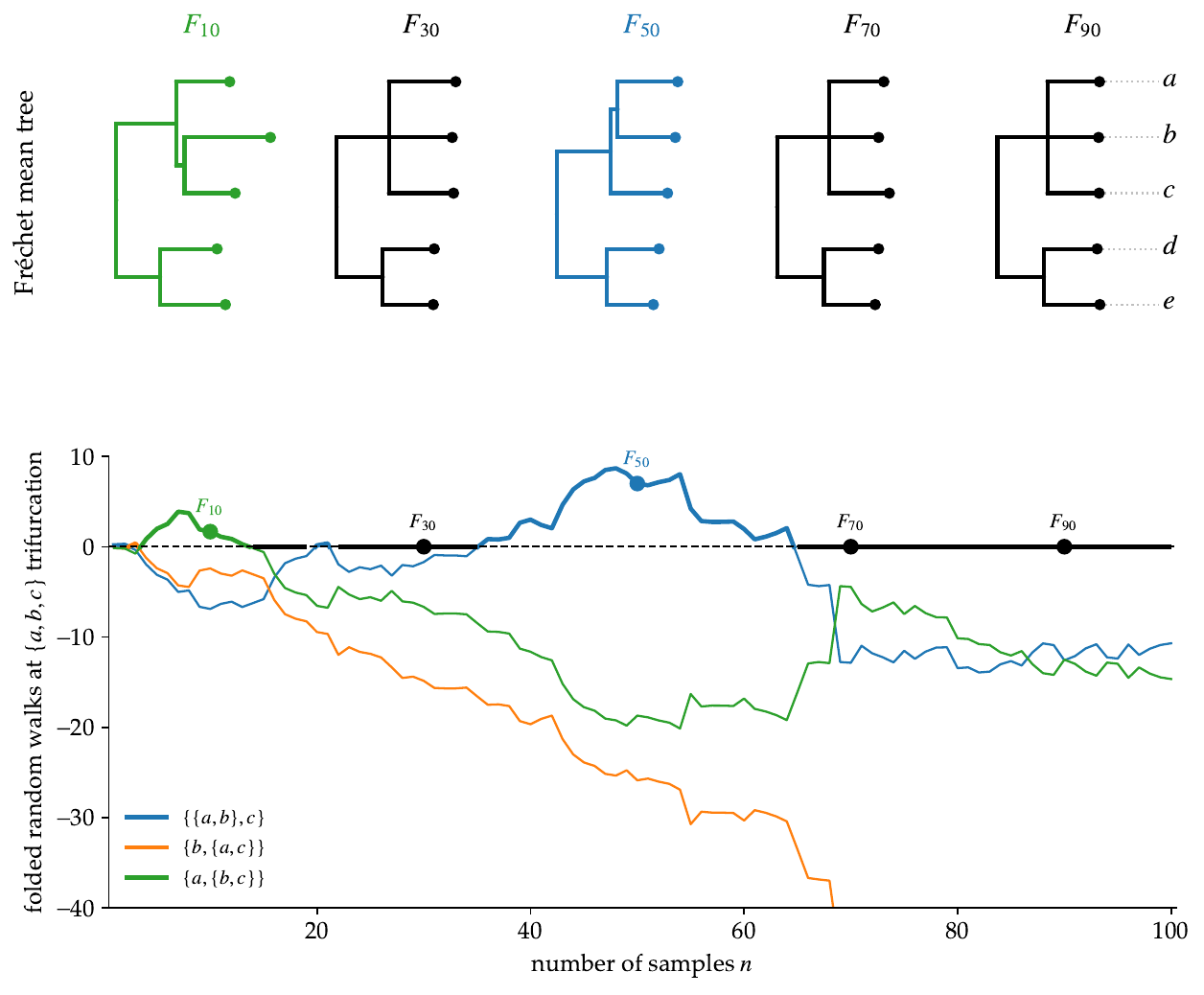}
		\caption{Random walks embedded in the Fr\'echet mean of samples in the Billera-Holmes-Vogtmann (BHV) treespace, where stickiness is determined by the largest last passage time above zero.		In a simulation on $N=5$ leaves, we show (top) the Fr\'echet mean tree $F_n$ for $n\in\{10,30,50,70,90\}$, and (bottom) the first coordinates of the folded random walks.}
		\label{fig:treespace-sim}
	\end{figure}

While these mathematical results are limited to the case of open books, we describe (Section~\ref{sec:BHV}) the minor modifications needed in order to apply them to the case of BHV treespaces near their top-dimension strata; in this setting, the codimension-one strata are exactly the sets of trees with a trifurcation.
Consequently, we revisit the example from Figure~\ref{fig:primate_mean}, and we use the bootstrap bias-corrected procedure above to estimate the probability that the observed trifurcation among \{Tree Shrew, Glires, Primates\} is a soft polytomy (i.e., that it will eventually resolve to binary branching if more data are collected).

All of our mathematical results are based on one fundamental observation, which we view as the central insight of this work.
To state it, recall that the open book admits \textit{folding maps} $\Phi_j:\B_{K,m}\to \Rbb^m$ for each $1\le j\le K$, visualized in Figure~\ref{fig:spider-book}.
It is known (see \cite[Lemma~3.3]{StickyOpenBooks} and \cite[Lemma~4]{UlmerVanHuckemann}) that the Fr\'echet mean satisfies $F_n\in\S$ if and only if we have $\sum_{i=1}^{n}(\Phi_j(Y_i))_1\le 0$ for all $1\le j\le K$, where $v_1$ denotes the first coordinate of $v\in\Rbb^m$; this identity has been previously exploited for each fixed $n\in\Nbb$, but we will instead consider it in a pathwise manner.
In particular, we observe that the stochastic process $\{S_{j,n}\}_{n\in\Nbb}$ defined via $S_{j,n}:=\sum_{i=1}^{n}(\Phi_j(Y_i))_1$ is a random walk, and stickiness is precisely the statement that the random walks $S_1,\ldots, S_K$ are all eventually non-positive almost surely.
Then, our first result follows by controlling simultaneously the last passage times of these random walks, and our second result follows by estimating the large deviations exponent of each constituent random walk.

This insight also holds in the BHV treespace, where the folding maps take a more complicated form (see Figure~\ref{fig:BHV-folding}) and there is a corresponding identity for trifurcations in the Fr\'echet mean tree (see \cite[Theorem~3]{BardenLe} and \cite[Section~4]{BardenOwenLe}); we illustrate the insight in Figure~\ref{fig:treespace-sim}, which depicts the Fr\'echet mean tree process alongside the induced random walks.
We believe that this insight will also be useful for developing other statistical methodologies in open books and in BHV treespaces, for instance determining the sample complexity of estimating the topology of the Fr\'echet mean tree.
	All figures in the paper can be replicated with the publicly available code at \url{https://github.com/aqjaffe/PhyloStickyRW}.
	
	\section{Preliminaries on Random Walks}\label{sec:RW}
	
	In this section we introduce some background that will be used in our main results.
	Throughout, we assume that $\nu$ is some probability measure on $\Rbb$ and that $X_1,X_2,\ldots$ are i.i.d. samples from $\nu$, on some probability space with probability measure denoted $\Pbb_{\nu}$ and expectation denoted $\Ebb_{\nu}$, and we write $X$ for an additional copy of this random variable.
	We define the \textit{random walk (RW) with step distribution $\nu$} as the stochastic process $S=\{S_n\}_{n\in\Nbb}$ defined via $S_n:=\sum_{i=1}^{n}X_i$, which satisfies $S_0=0$ by convention.
	We assume $\Ebb_{\nu}|X|<\infty$ and $\Ebb_{\nu}[X] < 0$, so that the law of large numbers implies $S_n/n\to \Ebb_{\nu}[X]<0$; this yields $S_n\to -\infty$ almost surely, meaning the walk has negative drift and is transient.
	In the remainder of this section we give some results which quantitatively describe this divergence.
	
	All of these results involve the \textit{cumulant generating function (CGF)} of $\nu$, which is the function $\Lambda_\nu:\Rbb\to\Rbb\cup\{\infty\}$ defined via
	\begin{equation*}
		\Lambda_\nu(\lambda):=\log \mathbb E_\nu[e^{\lambda X}].
	\end{equation*}	
	As explained in \cite[Chapter~2]{LargeDeviations}, the function $\Lambda_\nu$ is convex and lower semicontinuous, although it may take the value $\infty$ for some parameters $\lambda\in\Rbb$.
	Throughout, we additionally assume the \textit{Cram\'er condition} that there exists $\theta_{\nu}>0$ with $\Lambda_{\nu}(\theta_{\nu})=0$.
	By convexity and $\Lambda_\nu(0)=0$, such $\theta_{\nu}$ is necessarily unique.
	Also, $\Lambda_\nu$ is finite on $[0,\theta_\nu]$, since $e^{\lambda x}\le 1+e^{\theta_\nu x}$ for all $\lambda\in[0,\theta_\nu]$ and $x\in\Rbb$, hence $\Lambda_\nu$ is continuous on $[0,\theta_\nu]$ and infinitely differentiable on $(0,\theta_\nu)$ \cite[Lemma~2.2.5 and Exercise~2.2.24]{LargeDeviations}.
	Moreover, since $(e^{\lambda X}-1)/\lambda$ decreases to $X$ as $\lambda\downarrow 0$ and is integrable at $\lambda=\theta_\nu$, the monotone convergence theorem shows that the right derivative of $\Lambda_\nu$ at zero equals $\Ebb_{\nu}[X]<0$.
		
	The first result we will need is a description of the probability that $S$ deviates from its typical value at a fixed time $n$, which is the domain of large deviations theory (see \cite{LargeDeviations}).
	More precisely, we have $S_n\to -\infty$ almost surely, hence $\Pbb_{\nu}(S_n>0)\to 0$ as $n\to\infty$, but it is of interest to understand the precise speed at which this probability decays as a function of $n$.
	One answer to this question is \textit{Cram\'er's theorem} \cite[Theorem~2.2.3]{LargeDeviations} which states that the decay is always exponentially fast, and that the rate of exponential decay can be described as follows.
	Since $\Lambda_\nu'(0)=\Ebb_\nu[X]<0$ and $\Lambda_\nu$ is strictly convex and coercive, it has a unique minimizer which we denote $\gamma_\nu$ and which must lie in $(0,\infty)$; equivalently, it is the unique positive solution to $\Lambda_\nu'(\lambda)=0$.
	Then, we have
	\begin{equation}\label{eqn:Cramer-thm}
		\frac{1}{n}\log\Pbb_{\nu}(S_n>0)\to \inf_{\lambda\in\Rbb}\Lambda_{\nu}(\lambda)= -(-\Lambda_{\nu}(\gamma_{\nu})),
	\end{equation}
	as $n\to\infty$.
	Note that $\Lambda_{\nu}(0) = 0$ and $\Lambda_{\nu}'(0)<0$, so we have $\inf_{\lambda\in\Rbb}\Lambda_{\nu}(\lambda)< 0$ hence $-\Lambda_{\nu}(\gamma_{\nu})>0$.

	In fact, the previous result can be made sharper.
	Instead of asking to determine the limiting value of $n^{-1}\log \Pbb_{\nu}(S_n>0)$, we can ask to determine a sequence $p_n\to 0$, as an explicit function of $\nu$ and $n$, which satisfies $\Pbb_{\nu}(S_n>0)= p_n(1+o(1))$.
	The answer depends on the arithmetic structure of $\nu$; we say that $\nu$ is \emph{lattice} if there exist $a\in\Rbb$ and $d>0$ such that $\nu(a+d\Zbb)=1$, and that $\nu$ is \emph{non-lattice} otherwise.
	If $\nu$ is non-lattice, then the \textit{Bahadur-Rao theorem} \cite[Theorem~3.7.4]{LargeDeviations} states
	\begin{equation}\label{eqn:BH-thm}
		\Pbb_\nu(S_n>0)
		=\frac{c_{\nu}}{\sqrt{n}}(1+o(1))
		e^{-n(-\Lambda_\nu(\gamma_{\nu}))}
		\qquad\textnormal{where}\qquad
		c_{\nu}:=\frac{1}{\gamma_{\nu}\sqrt{2\pi \Lambda_{\nu}''(\gamma_{\nu})}}.
	\end{equation}
	In other words, the decay of $\Pbb_{\nu}(S_n>0)$ has exponential speed $e^{-n(-\Lambda_{\nu}(\gamma_{\nu}))}$ given by Cram\'er's theorem and a polynomial correction factor $c_{\nu}n^{-1/2}$ given by the Bahadur-Rao theorem.
	If $\nu$ is lattice, then \cite[Theorem~3.7.4]{LargeDeviations} still describes $\Pbb_{\nu}(S_n\ge 0)$ along those $n$ for which $0$ lies in the support of $S_n$, albeit with a different constant, but $\Pbb_{\nu}(S_n>0)$ need not admit asymptotics of the form~\eqref{eqn:BH-thm}; see Example~\ref{ex:K-pt} for an illustration.
	
		The second result we will need is a description of the probability that $S$ deviates from its typical behavior on an infinite time horizon; more precisely, we aim to compute the probability that the RW ever returns to a positive value if it is started at a negative value, which is the domain of ruin theory (see \cite{CramerLundberg}).
	To set this up, define the \textit{first passage time} as $\tau:=\inf\{n\ge0:S_n>0\}$, with the convention $\inf\varnothing=\infty$, and define the function $h_{\nu}(x) = \Pbb_{\nu}(\tau<\infty\,|\,S_0=x)$ for $x\in\Rbb$.
	If in addition to the existence of the Cram\'er root $\theta_{\nu}>0$ we also have $\Ebb_{\nu}[Xe^{\theta_{\nu}X}]<\infty$, then a consequence of the classical \textit{Cram\'er-Lundberg theory} \cite[Chapter~I]{CramerLundberg} is
	\begin{equation}\label{eqn:CR-limit}
		\frac{1}{x}\log h_{\nu}(x) \to \theta_{\nu}
	\end{equation}
	as $x\to -\infty$.
	Again, note the sign convention; $\theta_{\nu}>0$ by convention and $x< 0$, so $x\theta_{\nu}< 0$.
	This result can indeed be sharpened by determining a function $q(x)\to 0$, as an explicit function of $\nu$ and $x$, which satisfies $h_{\nu}(x)= q(x)(1+o(1))$, but we will not need the details in this paper.
	Instead, we will also use \textit{Lundberg's bound} which states
	\begin{equation}\label{eqn:Lundberg-bd}
		h_{\nu}(x) \le e^{\theta_{\nu}x}
	\end{equation}
	for all $x\le 0$.
	Also, note that $h_{\nu}(x)=1$ for $x\ge 0$.
	
	We also define the \textit{last passage time} as $\sigma:=\sup\{n\ge0:S_n>0\}$, with the convention that $\sup\varnothing =0$, whose tail events $\{\sigma>n\}$ will play a central role in the sequel for various choices of distributions $\nu$.
	Indeed, we will consider such tail events for a coupled collection of random walks, and we will use the preliminaries in this section to analyze their probabilities in terms of the CGF $\Lambda_{\nu}$ and the distinguished parameters $\gamma_\nu>0$ and $\theta_\nu>0$.

	\section{Main Results in Open Books}
	
	In this section we prove the main results of the paper.
	We begin in Subsection~\ref{subsec:def} where we introduce the basic definitions of open books, Fr\'echet means, and stickiness.
	Then, Subsection~\ref{subsec:represent} contains our representation of the sticking time in terms of suitable random walks, Subsection~\ref{subsec:asymp} develops precise asymptotics for the sticking time, and Subsection~\ref{subsec:estim} develops a procedure (and a bootstrap bias-corrected variation) for estimating the probability of future unsticking.
	
	\subsection{Basic Definitions}\label{subsec:def}
	
		\begin{figure}
		\centering
		\begin{tikzpicture}
			
			\def\shift{5}
			
			\filldraw[thin, opacity=0.65, color=black!30] (-\shift, 0) to (-\shift, 3) to (-1.75-\shift,3) to (-1.75-\shift, 0) to cycle;
			\draw[thin, color=black] (-\shift, 0) to (-\shift, 3) to (-1.75-\shift,3) to (-1.75-\shift, 0) to cycle;
			\filldraw[thin, opacity=0.65, color=black!30] (-\shift, 0) to (-\shift, 3) to (1.5-\shift,4) to (1.5-\shift, 1) to cycle;
			\draw[thin, color=black] (-\shift, 0) to (-\shift, 3) to (1.5-\shift,4) to (1.5-\shift, 1) to cycle;
			\filldraw[color=pythonorange] (0.5-\shift, 2) circle (2pt);
			\filldraw[thin, opacity=0.65, color=black!30] (-\shift, 0) to (-\shift, 3) to (1.25-\shift,2) to (1.25-\shift, -1) to cycle;
			\draw[thin, color=black] (-\shift, 0) to (-\shift, 3) to (1.25-\shift,2) to (1.25-\shift, -1) to cycle;
			
			\node at (-0.5-\shift,3.5) {$\mathcal{B}_{3,2}$};
			\node at (-2.25-\shift,0.5) {$\mathcal{P}_{1}$};
			\node at (1.75-\shift,0) {$\mathcal{P}_{2}$};
			\node at (2-\shift,1.5) {$\mathcal{P}_{3}$};
			
			\filldraw[color=pythonblue] (-1.25-\shift, 1.25) circle (2pt);
			\filldraw[color=pythongreen] (1-\shift, 0.25) circle (2pt);

			\filldraw[thin, opacity=0.65, color=black!30] (0, 0) to (0, 3) to (-1,4) to (-1, 1) to cycle;
			\draw[thin, color=black] (0, 0) to (0, 3) to (-1,4) to (-1, 1) to cycle;
			\filldraw[thin, opacity=0.65, color=black!30] (0, 0) to (0, 3) to (1.5,2.5) to (1.5, -0.5) to cycle;
			\draw[thin, color=black] (0, 0) to (0, 3) to (1.5,2.5) to (1.5, -0.5) to cycle;
			\filldraw[color=pythonorange] (-0.35, 2) circle (2pt);
			\filldraw[thin, opacity=0.65, color=black!30] (0, 0) to (0, 3) to (-1.75,3) to (-1.75, 0) to cycle;
			\draw[thin, color=black] (0, 0) to (0, 3) to (-1.75,3) to (-1.75, 0) to cycle;
			\node[color=white] at (0,-0.75) {$\mathcal{B}_{3,2}$};

			\filldraw[color=pythonblue] (-1.25, 1.25) circle (2pt);
			\filldraw[color=pythongreen] (1.25, 0.5) circle (2pt);

			\filldraw[thin, opacity=0.65, color=black!30] (\shift, 0) to (\shift, 3) to (-1.75+\shift,3) to (-1.75+\shift, 0) to cycle;
			\draw[thin, color=black] (\shift, 0) to (\shift, 3) to (-1.75+\shift,3) to (-1.75+\shift, 0) to cycle;
			\filldraw[thin, opacity=0.65, color=black!30] (\shift, 0) to (\shift, 3) to (1.75+\shift,3) to (1.75+\shift, 0) to cycle;
			\draw[thin, color=black] (\shift, 0) to (\shift, 3) to (1.75+\shift,3) to (1.75+\shift, 0) to cycle;
			\node[color=white] at (\shift,-0.75) {$\mathcal{B}_{3,2}$};
			\node at (-1.5+\shift,3.35) {\footnotesize$(-\infty,0]\times\Rbb$};
			\node at (1.5+\shift,3.35) {\footnotesize $[0,\infty)\times\Rbb$};
			
			\filldraw[color=pythongreen] (-1.25+\shift, 1.25) circle (2pt);
			\filldraw[color=pythongreen] (1.5+\shift, 0.75) circle (2pt);
			\filldraw[color=pythongreen] (-0.5+\shift, 2) circle (2pt);
			
			\draw [thick, -stealth, color=pythongreen] (-3, -0.5) to [out=-30,in=210] (3,-0.5);
			\node[color=pythongreen] at (0, -1) {$\Phi_2$};
		\end{tikzpicture}
	
		\caption{Visualization of the open book and the folding map.
		We show (left) the open book $\B_{3,2}$ with $K=3$ pages and dimension $m=2$.
		Then, we show (center) the folding map $\Phi_2:\B_{3,2}\to\Rbb^2$ which (right) identifies $\mathcal{P}_2$ with $[0,\infty)\times\Rbb$ and identifies both $\mathcal{P}_1$ and $\mathcal{P}_3$ with $(-\infty,0]\times\Rbb$.}
		\label{fig:spider-book}
	\end{figure}

	First, we precisely define the open book.
	Fix integers $K\ge 3$ and $m\ge 1$, and define the half-space $\mathcal{H}=[0,\infty)\times\Rbb^{m-1}$.
	Define the \textit{$K$-page open book of dimension $m$}, denoted $\B_{K,m}$, as the set $\bigsqcup_{1\le j\le K}(\{j\}\times \mathcal{H})$, with the points $\{(j,0,z):1\le j\le K,\ z\in\Rbb^{m-1}\}$ identified for each $z\in\Rbb^{m-1}$.
	The set resulting from identifying $\{(j,0,z):1\le j\le K,\ z\in\Rbb^{m-1}\}$ is called the \textit{spine} and denoted $\mathcal{S}$, and the point resulting from identifying $\{(j,0):1\le j\le K\}$ is called the \textit{origin} and denoted $o$.
	For $1\le j\le K$, the \textit{$j$th (closed) page} of $\B_{K,m}$ is the set $\P_j=\{j\}\times [0,\infty)\times\Rbb^{m-1}$, and the \textit{$j$th open page} is the interior $\P^{\circ}_j=\{j\}\times (0,\infty)\times\Rbb^{m-1}$ of $\P_j$.
	Note that $\B_{K,m}$ equals the disjoint union $\mathcal S\sqcup \bigsqcup_{1\le j\le K}\P_j^{\circ}$.
	Now define $d:\B_{K,m}\times\B_{K,m}\to[0,\infty)$ via
	\begin{equation*}
		d((j,v),(j',v'))=
		\begin{cases}
			\|v-v'\| &\textnormal{ if } j=j',\\
			\|v-Rv'\| &\textnormal{ if } j\neq j',
		\end{cases}
	\end{equation*}
	for all $(j,v),(j',v')\in\B_{K,m}$, where $R:\Rbb^m\to\Rbb^m$ denotes reflection in the first coordinate, i.e., $R(x_1,x_2,\ldots,x_m) = (-x_1,x_2,\ldots,x_m)$.
	It is known \cite[Lemma~1.3]{StickyOpenBooks} that $(\B_{K,m},d)$ is a complete metric space of nonpositive Alexandrov curvature, i.e., a Hadamard space.
	As such, we can apply the theory from \cite{SturmNPC} which provides many useful results for non-Euclidean statistics in Hadamard spaces.
	See the left of Figure~\ref{fig:spider-book} for an illustration of the $3$-page open book.
	Throughout the paper we follow the convention that the symbol $y$ represents an element of $\B_{K,m}$, that $j$ is a page index, and that $v$ is an element of $\mathcal{H}$, e.g., we write $y=(j,v)$.
	When $m=1$, we refer to the open book $\B_{K,1}$ as the \textit{$K$-spider}, the pages $\mathcal{P}_1,\ldots, \mathcal{P}_K$ as its \textit{legs}, and the spine $\S$ as its \textit{head}.
	
	Throughout the paper, $\mu$ denotes a probability measure on $\B_{K,m}$, and $Y_1,Y_2,\ldots$ denotes a sequence of i.i.d. samples from $\mu$, on some probability space $(\Omega,\F,\Pbb_{\mu})$ with expectation denoted $\Ebb_{\mu}$.
	All ``almost sure'' statements are understood to hold with respect to $\Pbb_{\mu}$.
	We write $Y_i=(J_i,V_i)$ for random variables $1\le J_i\le K$ and $V_i\in \mathcal{H}$, and we do not assume that $J_i$ and $V_i$ are independent of each other.
	We also write $Y$ for an additional sample from $\mu$, for convenience in some expressions, and $Y=(J,V)$.
	
	Next, we define Fr\'echet means and some related conditions.
	If we have the first moment condition $\Ebb_{\mu}[d(Y,o)] <\infty$, then a result of \cite{SturmNPC} implies that the \textit{population Fr\'echet mean} defined as
	\begin{equation*}
		F_{\mu} := \underset{x\in \B_{K,m}}{\arg\min}\,\Ebb_{\mu}\left[d^2(x,Y)-d^2(o,Y)\right]
	\end{equation*}
	exists and is unique.
	If we additionally have the stronger second moment condition $\Ebb_{\mu}[d^2(Y,o)] <\infty$, then the population Fr\'echet mean can equivalently be written as
	\begin{equation*}
		F_{\mu} = \underset{x\in \B_{K,m}}{\arg\min}\,\Ebb_{\mu}\left[d^2(x,Y)\right].
	\end{equation*}
	We also define the \textit{sample Fr\'echet mean} as
	\begin{equation*}
		F_n := \underset{x\in \B_{K,m}}{\arg\min}\,\frac{1}{n}\sum_{i=1}^{n}d^2(x,Y_i),
	\end{equation*}
	which exists and is unique for all $n\in\Nbb$, for any $\mu$.
	In fact, it is known \cite{SturmNPC} that the first moment condition implies $\Pbb_{\mu}(F_n\to F_{\mu} \textnormal{ as }n\to\infty) = 1$, which we refer to as the \textit{law of large numbers}.
	We refer to $\{F_n\}_{n\in\Nbb}$ as the \textit{Fr\'echet mean process}.
	
	We say that \textit{stickiness occurs for $\mu$} or that $\mu$ \textit{exhibits stickiness} if we have
	\begin{equation*}
		\Pbb_{\mu}(F_n\in\mathcal S \textnormal{ for sufficiently large }n\in\Nbb)=1.
	\end{equation*}
	To understand this further, we define, for each $1\le j\le K$, the \textit{$j$th folding map} $\Phi_j:\B_{K,m}\to\Rbb^m$ as follows, which we visualize in Figure~\ref{fig:spider-book}:
	\begin{equation*}
		\Phi_j(j',v) = \begin{cases}
			Rv &\textnormal{ if } j'\neq j,\\
			v &\textnormal{ if } j'=j.
		\end{cases}
	\end{equation*}
	A well-known necessary and sufficient condition for stickiness occurring is
	\begin{equation}\label{eqn:stickiness-condition}
		\max_{1\le j\le K}\Ebb_{\mu}\big[(\Phi_j(Y))_1\big] < 0,\tag{S}
	\end{equation}
	where $v_1$ represents the first coordinate of a vector $v\in\Rbb^m$.
	Thus, condition~\eqref{eqn:stickiness-condition} states that the first coordinate of $\Phi_j(Y)$ is negative on average, for all $1\le j\le K$.

	\subsection{Sticking Times and Folded Random Walks}\label{subsec:represent}
	
	Next, we precisely define the sticking time $T$ of the Fr\'echet mean process $\{F_n\}_{n\in\Nbb}$, and we give a representation of the tail event $\{T>n\}$ in terms of a certain collection of random walks.

	\begin{definition}
		The \textit{sticking time} is the random variable
		\begin{equation*}
			T:=\inf\{n\in\Nbb: F_m \in \mathcal S \textnormal{ for all }m\ge n\} =	\sup\{n\in\Nbb: F_n \notin \mathcal S\}+1,
		\end{equation*}
		with the convention that $\inf\varnothing= \infty$.
	\end{definition}

	Observe that $\mu$ exhibits stickiness if and only if $\Pbb_{\mu}(T<\infty)=1$.
	To gain some insight into why analyzing $T$ is difficult, note that, with respect to the filtration generated by $Y_1,Y_2,\ldots$, the random time $T$ is not a stopping time.
		
	Next, define the \textit{$j$th folded random walk} $S_j:=\{S_{j,n}\}_{n\in\Nbb}$ for each $1\le j\le K$ via $S_{j,n}:=\sum_{i=1}^{n}(\Phi_j(Y_i))_1$ for $n\in\Nbb$, which is a random walk in $\Rbb$.
	Also, for each $1\le j\le K$ let us define $\sigma_j:=\sup\{n\ge 0: S_{j,n}> 0\}$ as the last passage time (strictly) above zero.
	
	Now we give the following simple result, which is the key insight of this work.
	Note that it is essentially an immediate consequence of \cite[Lemma~3.3]{StickyOpenBooks}. (See also \cite[Lemma~4]{UlmerVanHuckemann} for the special case $m=1$, and \cite[Theorem~3]{BardenLe} for the more general setting of ``orthant spaces''.)
	
	\begin{lemma}\label{lem:sticking-time-rep}
		We have $T = \max_{1\le j\le K}\sigma_j+1$ almost surely.
	\end{lemma}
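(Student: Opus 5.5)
The plan is to reduce the statement to a deterministic, pathwise identity and then read off the two suprema. The key input is the characterization from \cite[Lemma~3.3]{StickyOpenBooks}, quoted in the introduction: for each fixed $n\in\Nbb$ and any points $Y_1,\ldots,Y_n\in\B_{K,m}$,
\begin{equation*}
F_n\in\S \quad\Longleftrightarrow\quad \sum_{i=1}^{n}(\Phi_j(Y_i))_1\le 0 \ \textnormal{ for all } 1\le j\le K,
\end{equation*}
that is, $F_n\in\S$ if and only if $\max_{1\le j\le K} S_{j,n}\le 0$. I would first state this for every $n\ge1$ simultaneously; it holds surely, since it is a deterministic fact about Fr\'echet means of finite point sets. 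Equivalently, $F_n\notin\S$ if and only if $S_{j,n}>0$ for some $j$. If one wanted to verify the cited lemma directly: on page $j$ the objective restricted to $\P_j$ equals the Euclidean objective for the folded points $\Phi_j(Y_i)$, minimized over the half-space $\mathcal{H}$. Its unconstrained minimizer is the Euclidean average of the $\Phi_j(Y_i)$, whose first coordinate is $S_{j,n}/n$. Convexity then shows that the minimizer lies in $\P_j^\circ$ exactly when $S_{j,n}>0$, and that at most one page can have $S_{j,n}>0$, because $S_{j,n}+S_{j',n}\le 0$ for $j\ne j'$.

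Next I would compare the two index sets. By definition,
\begin{equation*}
\{n\in\Nbb: F_n\notin\S\}=\bigcup_{j=1}^K\{n\ge 1: S_{j,n}>0\},
\end{equation*}
and taking suprema gives $\sup\{n: F_n\notin\S\}=\max_j \sup\{n\ge1: S_{j,n}>0\}$. The convention for $\sigma_j$ allows $n=0$, but $S_{j,0}=0$ is not $>0$, so $\sigma_j=\sup\{n\ge1:S_{j,n}>0\}$ with $\sup\varnothing=0$. Adding $1$ and using $T=\sup\{n:F_n\notin\S\}+1$ then yields $T=\max_j\sigma_j+1$.

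The only point needing care, and the main obstacle, is matching the degenerate conventions.
\begin{itemize}
\item If $F_n\in\S$ for all $n$, the definition of $T$ as an infimum gives $T=1$. This agrees with $\sup\varnothing+1$ under the convention $\sup\varnothing=0$, matching $\sigma_j=0$ for all $j$.
\item If the set $\{n:F_n\notin\S\}$ is unbounded, then some $\sigma_j=\infty$ and $T=\infty$. Both sides are then infinite, so the identity holds in $\{1,2,\ldots\}\cup\{\infty\}$.
\end{itemize}
The ``almost surely'' qualifier only serves to cover any null set on which the argmin or the cited characterization might be stated. Otherwise the identity is pathwise.
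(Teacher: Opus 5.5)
Your proposal is correct and takes essentially the same approach as the paper. Both arguments reduce, via \cite[Lemma~3.3]{StickyOpenBooks}, to the fixed-$n$ identity $\{F_n\notin\S\}=\{\max_j S_{j,n}>0\}$, and then take suprema of $\{n: F_n\notin\S\}=\bigcup_j\{n: S_{j,n}>0\}$. The only differences are that the paper derives that identity through the page-wise version $\{F_n\in\P_j^{\circ}\}=\{S_{j,n}>0\}$, whereas you quote the spine characterization directly, and you also handle the $\sup\varnothing$ and $T=\infty$ conventions explicitly.
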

	
	\begin{proof}
		We first show that, for every $n\in\Nbb$ and every $1\le j\le K$, we have
		\begin{equation}\label{eq:loc}
			\left\{F_n\in\P^{\circ}_j\right\} =\left\{S_{j,n}>0\right\}.
		\end{equation}
		The first claim of \cite[Lemma~3.3]{StickyOpenBooks} exactly states that the left side is contained in the right side.
		For the converse, suppose $S_{j,n}>0$ so that the second claim of \cite[Lemma~3.3]{StickyOpenBooks} gives $F_n\in\P^{\circ}_j\cup\mathcal S=\P_j$ with $(\Phi_j(F_n))_1=S_{j,n}/n>0$, hence $F_n\in\P^{\circ}_j$.
		This establishes \eqref{eq:loc}, hence also
		\begin{equation}\label{eq:offspine}
			\left\{F_n\notin\mathcal S\right\} =\left\{\max_{1\le j\le K}S_{j,n}>0\right\}
		\end{equation}
		for all $n\in\Nbb$.
		Now use~\eqref{eq:offspine} and the definitions to get
		\begin{equation*}
			\{n\in\Nbb: F_n\notin\mathcal S\} =\left\{n\in\Nbb:\max_{1\le j\le K}S_{j,n}>0\right\} =\bigcup_{1\le j\le K}\{n\in\Nbb:S_{j,n}>0\}.
		\end{equation*}
		Taking supremum over $n\in\Nbb$ finishes the proof.
	\end{proof}

	To understand the value of this result, note that the stickiness condition~\eqref{eqn:stickiness-condition} forces each folded random walk to have negative drift and hence to have a finite last-passage time above zero; consequently, using the probabilistic tools developed in Section~\ref{sec:RW}, we can describe the distribution of these constituent last passage times quantitatively.

	\begin{remark}
		What we call ``sticky'' in this paper is sometimes called ``fully sticky'' by other authors (e.g., \cite[Definition~2.10]{StickyOpenBooks}) who also consider the ``partly sticky'' case where there is some (unique) index $1\le j\le K$ satisfying $\Ebb_{\mu}\big[(\Phi_j(Y))_1\big] = 0$.
		In terms of the folded random walks, the partly sticky case is equivalent to one folded walk being recurrent and all others being transient.
	\end{remark}
	
	Next, we introduce some further integrability of $\mu$ that will be used in order to apply the large deviations and ruin theory results from Section~\ref{sec:RW} to each folded random walk.
	Specifically, we require Cram\'er's condition:
	\begin{equation}\label{eqn:exp-moment-condition}
		\textnormal{ there exists } \theta_j>0 \textnormal{ satisfying } \Ebb_{\mu}\big[e^{\theta_j(\Phi_j(Y))_1}\big]=1\textnormal{ for each } 1\le j\le K. \tag{CC}
	\end{equation}
	Under~\eqref{eqn:stickiness-condition} each such root is unique, and its existence forces $\Pbb_{\mu}(Y\in\P^{\circ}_j)=\Pbb_{\mu}((\Phi_j(Y))_1>0)>0$ for each $1\le j\le K$, so that each folded random walk takes strictly positive values with positive probability.
	
	Lastly, we introduce a non-lattice condition on the steps of the folded random walks, i.e.,
		\begin{equation}\label{eqn:non-lattice-condition}
			(\Phi_j(Y))_1 \textnormal{ has a non-lattice distribution for all } 1 \le j\le K. \tag{NL}
	\end{equation}
	This condition is primarily used to simplify our analyses and the resulting expressions; we treat one specific lattice case in detail in Example~\ref{ex:K-pt}.
	
	\subsection{Precise Asymptotics for the Sticking Time}\label{subsec:asymp}
	
	This subsection contains our first main result, which allows us to analyze the probabilities $\Pbb_{\mu}(T>n)$ as $n\to\infty$.
	As a reminder, $\mu$ is a fixed probability measure on the open book $\B_{K,m}$, and $Y_1,Y_2,\ldots$ are i.i.d. samples from $\mu$.
	
	To state our result, we introduce some notation.
	For each $1\le j\le K$, we let $\nu_j$ be the probability measure on $\Rbb$ denoting the law of $(\Phi_j(Y))_1$ when $Y$ has law $\mu$; in other words, $\nu_j$ is the law of the first coordinate of the data when folded onto page $j$.
	Then, we adopt the notation from Section~\ref{sec:RW}, but we simply use $j$ subscripts, rather than $\nu_j$ subscripts, for simplicity.
	So, the function $\Lambda_j:\Rbb\to \Rbb\cup\{\infty\}$ denotes the CGF of $\nu_j$, the value $\gamma_j>0$ is the minimizer appearing in Cram\'er's theorem~\eqref{eqn:Cramer-thm}, and the function $h_j:\Rbb\to [0,1]$ and the value $\theta_j>0$ are those appearing in the Cram\'er-Lundberg theorem~\eqref{eqn:CR-limit} and Lundberg's bound~\eqref{eqn:Lundberg-bd}.
	
	We also introduce some new notation for the pre-factors appearing in the Bahadur-Rao theorem~\eqref{eqn:BH-thm}.
	Following the previous paragraph, we write $c_j$ for the constant $c_{\nu_j}$ defined in~\eqref{eqn:BH-thm}, and we also define
	\begin{equation}\label{eqn:cprime-nu-def}
		C_{j}:=\frac{1}{(\theta_j-\gamma_{j})\sqrt{2\pi \Lambda_{j}''(\gamma_{j})}},
	\end{equation}
	which satisfies $(\theta_j-\gamma_j)C_j = \gamma_jc_j$, or equivalently $c_j+C_j=\theta_jc_j/(\theta_j-\gamma_j)$.
	(As we will see later in the proofs, it turns out that $C_j$ is the pre-factor appearing in the Bahadur-Rao theorem for a certain random walk whose step distribution is a Cram\'er tilt of $\nu_j$.)
	
	Then we have the following:

	\begin{theorem}\label{thm:sticking-time-tail}
		Under conditions~\eqref{eqn:stickiness-condition},~\eqref{eqn:exp-moment-condition}, and~\eqref{eqn:non-lattice-condition} we have
		\begin{equation}\label{eqn:sticking-time-tail}
			\frac{\max_{j\in\mathcal{J}}c_j}{\sqrt{n}}(1+o(1))e^{-\alpha n}\le \Pbb_{\mu}(T>n) \le \frac{\sum_{j\in\mathcal{J}}(c_j+C_j)}{\sqrt{n}}(1+o(1))e^{-\alpha n}
		\end{equation}
		as $n\to\infty$, where
		\begin{equation*}
			\alpha:=\min_{1\le j\le K}\left(-\Lambda_j(\gamma_j)\right)\qquad\textnormal{and}\qquad \mathcal{J}=\{1\le j\le K: -\Lambda_j(\gamma_j)=\alpha\}.
		\end{equation*}
	\end{theorem}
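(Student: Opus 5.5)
By Lemma~\ref{lem:sticking-time-rep}, $\{T>n\}=\{\max_j\sigma_j\ge n\}=\bigcup_{1\le j\le K}\{\sigma_j\ge n\}$. So the proof reduces to precise asymptotics for the tail of a single last passage time, $\Pbb_\mu(\sigma_j\ge n)$. We then take a maximum over $j$ for the lower bound and a union bound for the upper bound. Pages $j\notin\mathcal{J}$ have a strictly larger rate $-\Lambda_j(\gamma_j)>\alpha$. Once we have even a crude bound $\Pbb(\sigma_j\ge n)\le C' n^{-1/2}e^{-n(-\Lambda_j(\gamma_j))}$, their contributions are $o(n^{-1/2}e^{-\alpha n})$ and disappear into the $(1+o(1))$ factors. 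The main work is therefore the two-sided estimate
\begin{equation*}
	\frac{c_j}{\sqrt n}(1+o(1))e^{-n(-\Lambda_j(\gamma_j))}\le \Pbb_\mu(\sigma_j\ge n)\le \frac{c_j+C_j}{\sqrt n}(1+o(1))e^{-n(-\Lambda_j(\gamma_j))}.
\end{equation*}

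\textbf{Lower bound.} Since $\{S_{j,n}>0\}\subseteq\{\sigma_j\ge n\}$, the Bahadur--Rao theorem~\eqref{eqn:BH-thm} applies directly, using \eqref{eqn:non-lattice-condition}. This gives $\Pbb(\sigma_j\ge n)\ge \Pbb(S_{j,n}>0)=c_j n^{-1/2}(1+o(1))e^{-n(-\Lambda_j(\gamma_j))}$. Taking the maximum over $j\in\mathcal{J}$ yields the left side of~\eqref{eqn:sticking-time-tail}.

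\textbf{Upper bound.} Split $\{\sigma_j\ge n\}\subseteq\{S_{j,n}>0\}\cup\{S_{j,n}\le 0,\ \exists\, k>n:\ S_{j,k}>0\}$. The first event again contributes $c_jn^{-1/2}(1+o(1))e^{-n(-\Lambda_j(\gamma_j))}$. For the second, apply the Markov property at time $n$ and Lundberg's bound~\eqref{eqn:Lundberg-bd}:
\begin{equation*}
	\Pbb\big(S_{j,n}\le 0,\ \exists k>n: S_{j,k}>0\big)=\Ebb\big[h_j(S_{j,n})\ind\{S_{j,n}\le 0\}\big]\le \Ebb\big[e^{\theta_j S_{j,n}}\ind\{S_{j,n}\le 0\}\big].
\end{equation*}

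To evaluate the right side, change measure to the Cram\'er tilt $\tilde\nu_j(dx)=e^{\theta_j x}\nu_j(dx)$. This is a probability measure because $\Lambda_j(\theta_j)=0$. The expectation then equals $\tilde\Pbb(\tilde S_n\le 0)$, where $\tilde S$ is a random walk with step law $\tilde\nu_j$. Its CGF is $\tilde\Lambda(\lambda)=\Lambda_j(\lambda+\theta_j)$, with derivative $\Lambda_j'(\theta_j)>0$ at zero, so the tilted walk has positive drift. The relevant event is the lower deviation $\{\tilde S_n\le 0\}$, equivalently $\{-\tilde S_n\ge 0\}$. The walk $-\tilde S$ has CGF $\lambda\mapsto\Lambda_j(\theta_j-\lambda)$, which is minimized at $\lambda=\theta_j-\gamma_j>0$. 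There it takes the same minimal value $\Lambda_j(\gamma_j)$ and has second derivative $\Lambda_j''(\gamma_j)$.

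The tilted law is still non-lattice. Applying the Bahadur--Rao theorem to $-\tilde S$ gives $C_jn^{-1/2}(1+o(1))e^{-n(-\Lambda_j(\gamma_j))}$, with $C_j$ exactly as in~\eqref{eqn:cprime-nu-def}. Adding the two contributions gives the upper bound $(c_j+C_j)$. Summing over $j\in\mathcal{J}$, with the other pages negligible, gives the right side of~\eqref{eqn:sticking-time-tail}.

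\textbf{Main obstacle.} The delicate step is justifying Bahadur--Rao for the tilted walk in the form needed. Two issues arise. First, the event is $\{-\tilde S_n\ge 0\}$ rather than a strict inequality; in the non-lattice case the boundary contributes negligibly, so the same asymptotics hold. Second, one must check that $\gamma_j$ lies in the interior of the domain where $\Lambda_j$ is finite and smooth, which holds because $0<\gamma_j<\theta_j$ and $\Lambda_j$ is finite on $[0,\theta_j]$. One must also confirm that the CGF of $-\tilde S$ is finite in a neighborhood of $\theta_j-\gamma_j$. Finally, the crude bound for pages $j\notin\mathcal{J}$ follows from the same two-term decomposition, so no extra argument is needed there.
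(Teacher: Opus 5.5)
Your proposal is correct and follows essentially the same route as the paper. The lower bound comes from Bahadur--Rao applied to $\{S_{j,n}>0\}\subseteq\{\sigma_j\ge n\}$; for the upper bound, the Markov-property split at time $n$, Lundberg's bound, the Cram\'er tilt by $\theta_j$, and Bahadur--Rao for the reflected tilted walk produce $C_j$, and a union bound with pages outside $\mathcal{J}$ negligible finishes the argument. Your extra checks, namely that the atom at the boundary of $\{-\tilde S_n\ge 0\}$ is negligible and that $\Lambda_j$ is finite near $\gamma_j\in(0,\theta_j)$, are points the paper passes over implicitly.
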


	\begin{proof}
		The lower bound is easy.
		Fix any $j\in\mathcal{J}$ so that we have $\alpha=-\Lambda_j(\gamma_j)$, and then use Lemma~\ref{lem:sticking-time-rep} and the Bahadur-Rao theorem~\eqref{eqn:BH-thm} to get
		\begin{equation*}
			\Pbb_{\mu}(T>n) \ge \Pbb_{\mu}(\sigma_j\ge n) \ge\Pbb_{\mu}(S_{j,n}>0)= \frac{c_j}{\sqrt{n}}(1+o(1))e^{-\alpha n}
		\end{equation*}
		as $n\to\infty$.
		The upper bound takes a few more steps.
		First, use the Markov property to write
		\begin{equation*}
			\Pbb_{\mu}(\sigma_j\ge n) = \Pbb_{\mu}(S_{j,m}>0 \textnormal{ for some }m\ge n) = \Pbb_{\mu}(S_{j,n}>0)+\Ebb_{\mu}[h_j(S_{j,n})\ind\{S_{j,n}\le 0\}],
		\end{equation*}
		since we have $h_j\equiv 1$ on $(0,\infty)$, and note that the asymptotics of the first term follow from \eqref{eqn:BH-thm} as above.
		For the second term, let $\theta_j>0$ denote the Cram\'er root of $\nu_j$, and define the probability measure $\tilde{\Pbb}_{j}$ whose Radon-Nikodym derivative with respect to $\Pbb_{\mu}$ is  $e^{\theta_j S_{j,n}}$ and write $\tilde{\Ebb}_{j}$ for the corresponding expectation; there is no normalization required since we have $\Lambda_j(\theta_j)=0$ by construction.
		Now, Lundberg's bound~\eqref{eqn:Lundberg-bd} implies
		\begin{align*}
			\Ebb_{\mu}[h_j(S_{j,n})\ind\{S_{j,n}\le 0\}] &\le \Ebb_{\mu}[e^{\theta_jS_{j,n}}\ind\{S_{j,n}\le 0\}] \\
			&= \tilde{\Ebb}_{j}[\ind\{S_{j,n}\le 0\}]  \\
			&= \tilde{\Pbb}_{j}(S_{j,n}\le 0).
		\end{align*}
				Next observe that $e^{\theta_jS_{j,n}}$ factorizes as $\prod_{i=1}^{n}e^{\theta_j(\Phi_j(Y_i))_1}$, so $\{S_{j,n}\}_{n\in\Nbb}$ is still a random walk under $\tilde{\Pbb}_{j}$, whose increments have cumulant generating function $\lambda\mapsto\Lambda_j(\lambda+\theta_j)-\Lambda_j(\theta_j)= \Lambda_j(\lambda+\theta_j)$.
		Hence, under $\tilde{\Pbb}_{j}$, the reflected walk $\{-S_{j,n}\}_{n\in\Nbb}$ has increments with cumulant generating function $\lambda\mapsto\Lambda_j(\theta_j-\lambda)$, which has derivative $-\Lambda_j'(\theta_j)<0$ at $\lambda=0$, unique minimizer $\theta_j-\gamma_j>0$, minimum value $\Lambda_j(\gamma_j)$, and second derivative $\Lambda_j''(\gamma_j)$ at its minimizer; moreover, these increments are non-lattice by~\eqref{eqn:non-lattice-condition}, since exponential tilting and reflection both preserve the property of being non-lattice.
		Therefore, the Bahadur-Rao theorem~\eqref{eqn:BH-thm} applied to $\tilde{\Pbb}_{j}(S_{j,n}\le 0)=\tilde{\Pbb}_{j}(-S_{j,n}\ge 0)$ yields the same exponential rate of decay as for $\Pbb_{\mu}(S_{j,n}> 0)$, but with pre-factor $C_j$ in place of $c_j$.
		Consequently, we have shown
		\begin{equation*}
			\Pbb_{\mu}(\sigma_j\ge n) \le \Pbb_{\mu}(S_{j,n}>0) + \tilde{\Pbb}_{j}(S_{j,n}\le 0) = \frac{c_j+C_j}{\sqrt{n}}(1+o(1))e^{-(-\Lambda_j(\gamma_j))n}
		\end{equation*}
		as $n\to\infty$.
		Lastly, use the union bound to get
		\begin{align*}
			\Pbb_{\mu}(T>n) &\le \sum_{1\le j\le K}\Pbb_{\mu}(\sigma_j\ge n) \\
			&= \sum_{1\le j\le K} \frac{c_j+C_j}{\sqrt{n}}(1+o(1))e^{-(-\Lambda_j(\gamma_j))n} \\
			&= \sum_{j\in\mathcal{J}} \frac{c_j+C_j}{\sqrt{n}}(1+o(1))e^{-\alpha n}
		\end{align*}
		where we used that all terms with index $j\notin\mathcal{J}$ have faster exponential decay than those with index $j\in\mathcal{J}$, hence are negligible.
		This completes the proof.
	\end{proof}

	The preceding result gives a precise estimate on $\Pbb_{\mu}(T>n)$ in the sense that both sides decay like $n^{-1/2}e^{-\alpha n}$.
	Next we give some concrete examples in which we can directly compute all of the values appearing in this result.
	Since we always reduce to the first component of the random walk along each page, it suffices to focus on the $m=1$ case of the $K$-spider.
			
		\begin{example}[Gamma radial distribution]\label{ex:radial}
	Consider the setting that the page index $J$ is sampled uniformly among $\{1,\ldots, K\}$ and that $V$ is sampled from distribution $\rho$, independent of $J$.
	We refer to such a distribution $\mu$ as a ``radial'' distribution, since it depends only on the distance from the spine.
	For concreteness, consider the case that $\rho$ is a Gamma distribution with shape $m$ and rate $m$, for $m\ge 1$.
	Then, $\Ebb_{\mu}[V]=1$, so condition~\eqref{eqn:stickiness-condition} holds because $K\ge 3$, and condition~\eqref{eqn:non-lattice-condition} holds because each folded distribution is absolutely continuous.

	Next we compute $\alpha$, $\mathcal{J}$, and the constants $c_j$ and $C_j$ appearing in Theorem~\ref{thm:sticking-time-tail}.
	To do this, note by symmetry that all of the functions $\Lambda_j$ are equal, so we may write them as $\Lambda$ and note
	\begin{equation*}
		\Lambda(\lambda) =\log\left(\frac1K\left(1-\frac{\lambda}{m}\right)^{-m}+\frac{K-1}{K}\left(1+\frac{\lambda}{m}\right)^{-m}\right)
	\end{equation*}
	for $|\lambda|<m$, and $\Lambda(\lambda)\to\infty$ as $\lambda\uparrow m$.
	Differentiating this and setting the result to zero leads to the minimizer
	\begin{equation*}
		\gamma=m\cdot\frac{t-1}{t+1},
	\end{equation*}
	where $t:=(K-1)^{1/(m+1)}$, as well as the minimum value
	\begin{equation*}
		\Lambda(\gamma)=\log\left(\frac1K\left(\frac{t+1}{2}\right)^m+\frac{K-1}{K}\left(\frac{t+1}{2t}\right)^m\right)=
		\log\left(\frac{(t+1)^{m+1}}{2^m K}\right).
	\end{equation*}
	By symmetry we have $\mathcal{J}=\{1,\ldots,K\}$, and
	\begin{equation*}
		\alpha=-\Lambda(\gamma)=\log\!\left(\frac{2^m K}{(t+1)^{m+1}}\right).
	\end{equation*}
	Next, we can directly calculate
	\begin{equation*}
		\Lambda''(\gamma)
		=
		\frac{m+1}{m}\cdot\frac{(t+1)^2}{4t},
	\end{equation*}
	hence, for every $j\in\mathcal{J}$, we have $c_j=c^{\ast}$ where
	\begin{equation*}
		c^{\ast}:=\frac{1}{\gamma\sqrt{2\pi\Lambda''(\gamma)}}=\frac{\sqrt{2t}}{(t-1)\sqrt{\pi m(m+1)}}.
	\end{equation*}
	Similarly, for every $j\in\mathcal{J}$ we have $C_j=C^{\ast}$ where
	\begin{equation*}
		C^{\ast}:=\frac{1}{(\theta-\gamma)\sqrt{2\pi\Lambda''(\gamma)}}=\frac{\gamma}{\theta-\gamma}\,c^{\ast},
	\end{equation*}
	and where $\theta$ is the unique root of $\Lambda$ in $(0,m)$, which exists because $\Lambda(\gamma)<0$ and $\Lambda(\lambda)\to\infty$ as $\lambda\uparrow m$.
	Since $\#\mathcal{J}=K$, Theorem~\ref{thm:sticking-time-tail} therefore gives
	\begin{equation}\label{eqn:radial-bounds}
		\frac{c^{\ast}}{\sqrt{n}}(1+o(1))\left(\frac{(t+1)^{m+1}}{2^mK}\right)^{n}\le \Pbb_{\mu}(T>n) \le \frac{K\theta c^{\ast}}{(\theta-\gamma)\sqrt{n}}(1+o(1))\left(\frac{(t+1)^{m+1}}{2^mK}\right)^{n}
	\end{equation}
	as $n\to\infty$.

	For general $m\ge 1$ the root $\theta$ has no closed form; in the exponential case $m=1$, everything can be made explicit.
	Indeed, then $t=\sqrt{K-1}$, and the equation $\Lambda(\theta)=0$ reads $1/(1-\theta) + (K-1)/(1+\theta)=K$ whose unique positive root is $\theta=(K-2)/K=(t^2-1)/(t^2+1)$.
	Combining this with $\gamma=(t-1)/(t+1)$ gives $\theta/(\theta-\gamma)=(t+1)^2/(2t)$, and therefore~\eqref{eqn:radial-bounds} becomes
	\begin{equation*}
		\frac{\sqrt{t}}{(t-1)\sqrt{\pi n}}(1+o(1))\left(\frac{(t+1)^{2}}{2K}\right)^{n}\le \Pbb_{\mu}(T>n) \le \frac{K(t+1)^2}{2(t-1)\sqrt{\pi t n}}(1+o(1))\left(\frac{(t+1)^{2}}{2K}\right)^{n}
	\end{equation*}
	as $n\to\infty$.
	For example, when $K=3$ this reads
	\begin{equation*}
		\frac{1.6198}{\sqrt{n}}(1+o(1))e^{-0.0290n}\le \Pbb_{\mu}(T>n) \le \frac{10.0135}{\sqrt{n}}(1+o(1))e^{-0.0290n}
	\end{equation*}
	as $n\to\infty$.
	\end{example}

	\begin{example}[heterocrural\footnote{\textit{crural} (adj): of or pertaining to legs, in anatomy} model]
	Consider the case that the page index $J$ is selected uniformly at random among $\{1,\ldots, K\}$ and, conditional on $\{J=j\}$, the radius $V$ is sampled from a probability measure $\rho_j$ on $(0,\infty)$; here, $\rho_1,\ldots, \rho_K$ is a fixed collection of probability measures, which need not be equal to each other.
	Note that condition \eqref{eqn:stickiness-condition} is equivalent to $\int_{0}^{\infty}v\diff \rho_j(v)< \sum_{j'\neq j}\int_{0}^{\infty}v\diff \rho_{j'}(v)$ for all $1\le j\le K$, which holds, for instance, if all $\rho_j$ have the same expectation.
	Note that condition \eqref{eqn:exp-moment-condition} is equivalent to the usual Cram\'er condition holding for all $\rho_1,\ldots, \rho_K$.
	
	To be concrete, we fix $K=3$ and we take some particular choices of $\rho_1,\rho_2,\rho_3$ for which we can do numerical calculations, although we do not have closed-form expressions for the resulting probabilities.
	More precisely, take $\rho_1=\delta_{1}, \rho_2 = \textnormal{Exp}(1)$, and $\rho_3= |\mathcal{N}(0,\pi/2)|$, where the scaling of the third distribution is used to ensure that they all have expectation equal to 1.
	In other words, the distribution $\mu$ has three different types of legs: a degenerate leg, an exponential leg, and a half-Gaussian leg.
	
		In order to compute $c_j$ and $C_j$, we note that each of $(\Phi_1(Y))_1,(\Phi_2(Y))_1$, and $(\Phi_3(Y))_1$ is non-lattice, so we directly compute
	\begin{equation*}
		c_1 \approx 1.4244, \qquad c_2 \approx 1.4161, \qquad\textnormal{and}\qquad c_3 \approx 1.4135.
	\end{equation*}
	(We note that all three of $c_1,c_2,c_3$ happen to be quite close to each other, despite the three legs having rather different distributions.)
	We also numerically solve for the Cram\'er roots
	\begin{equation*}
		\theta_1 \approx 0.5493, \qquad \theta_2 \approx 0.3748, \qquad\textnormal{and}\qquad \theta_3 \approx 0.4495,
	\end{equation*}
	which by~\eqref{eqn:cprime-nu-def} give
	\begin{equation*}
		C_1 \approx 1.3161, \qquad C_2 \approx 1.6084, \qquad\textnormal{and}\qquad C_3 \approx 1.4618.
	\end{equation*}
	In particular, since $\mathcal{J}=\{2\}$ is a singleton, we have $\max_{j\in\mathcal{J}}c_j=c_2\approx1.4161$ and $\sum_{j\in\mathcal{J}}(c_j+C_j)=c_2+C_2\approx 3.0245$.

	Putting it all together, Theorem~\ref{thm:sticking-time-tail} gives
	\begin{equation*}
		\frac{1.4161}{\sqrt{n}}(1+o(1))e^{-0.0351 n}
		\leq
		\Pbb_{\mu}(T>n)
		\leq
		\frac{3.0245}{\sqrt{n}}(1+o(1))e^{-0.0351 n}
	\end{equation*}
	as $n \to \infty$, where the pre-factor and exponent reflect the fact that the tail of the sticking time is governed by the exponential leg.
\end{example}

While our non-lattice condition~\eqref{eqn:non-lattice-condition} excludes lattice distributions, such as those supported on finitely many equally spaced points, the following example is a particular lattice setting where we can reach a similar conclusion by more elementary methods.
In fact, our conclusion in the following example is sharper than that of Theorem~\ref{thm:sticking-time-tail}, since we determine the exact pre-factor (i.e., exact asymptotics) for the probability $\Pbb_{\mu}(T>n)$.
	
	\begin{example}[$K$-Rademacher distribution]\label{ex:K-pt}
	Consider the case that the page index $J$ is sampled from some distribution $w\in\Delta_K$ and the radius $V$ is taken to be 1 almost surely; we refer to the resulting $\mu\in\mathcal{P}(\B_{K,1})$ as the \textit{$K$-Rademacher distribution}, and we write $w^{\ast}:=\max_{1\le j\le K}w_j$ throughout.
	We note in this case that condition~\eqref{eqn:exp-moment-condition} is trivially satisfied whenever $w_j>0$ for all $1\le j\le K$, and that condition~\eqref{eqn:stickiness-condition} holds if and only if $w_j<1/2$ for all $1\le j\le K$.
	Each folded random walk is now a simple biased random walk, namely $S_{j,n}=2\sum_{i=1}^{n}\ind\{J_i=j\}-n$ where $\sum_{i=1}^{n}\ind\{J_i=j\}$ has a binomial distribution with parameters $n$ and $w_j$.
	Thus, we will use a similar approach to the proof of Theorem~\ref{thm:sticking-time-tail}, but with Stirling's approximation in place of the Bahadur-Rao theorem.
	
	First we follow the beginning of the proof of Theorem~\ref{thm:sticking-time-tail} but with an exact expression replacing Lundberg's bound.
	That is, recall the well-known identity	
	\begin{equation*}
		h_j(x) = \left(\frac{w_j}{1-w_j}\right)^{1-x}
	\end{equation*}
	for $x\le0$, which is an exact form of Lundberg's bound~\eqref{eqn:Lundberg-bd} up to an additional factor of $w_j/(1-w_j)$.
	Then we compute
	\begin{align*}
		\Pbb_{\mu}(\sigma_j\ge n) &= \Pbb_{\mu}(S_{j,n}>0)+\Ebb_{\mu}\big[h_j(S_{j,n})\ind\{S_{j,n}\le 0\}\big] \\
		&= \Pbb_{\mu}(S_{j,n}>0)+\frac{w_j}{1-w_j}\,\Ebb_{\mu}\left[\left(\frac{w_j}{1-w_j}\right)^{-S_{j,n}}\ind\{S_{j,n}\le 0\}\right]
	\end{align*}
	for all $n\in\Nbb$.
	Moreover, tilting the law $\nu_j$ by the Cram\'er root $\theta_j=\log((1-w_j)/w_j)$ simply interchanges the weights $w_j$ and $1-w_j$, hence replaces $S_{j,n}$ by $-S_{j,n}$ in distribution, and therefore
	\begin{equation}\label{eqn:K-pt-exact}
		\Pbb_{\mu}(\sigma_j\ge n) = \Pbb_{\mu}(S_{j,n}>0)+\frac{w_j}{1-w_j}\,\Pbb_{\mu}(S_{j,n}\ge 0)
	\end{equation}
	for all $n\in\Nbb$.
	
	Second, we compute the asymptotics of the two probabilities appearing in~\eqref{eqn:K-pt-exact}, which requires us to separately consider the case of even and odd $n$.
	Since $\nu_j$ is lattice with span $2$, the lattice case of \cite[Theorem~3.7.4]{LargeDeviations} applies at the point $0$, which lies in the support of $S_{j,n}$ exactly when $n$ is even; thus, we have the following when $n\to\infty$ along even integers:
	\begin{equation}\label{eqn:K-pt-even}
		\Pbb_{\mu}(S_{j,n}\ge0) = \sqrt{\frac{2}{\pi n}}\cdot\frac{1-w_j}{1-2w_j}\,\big(4w_j(1-w_j)\big)^{\frac{n}{2}}(1+o(1))
	\end{equation}
	Also, Stirling's formula gives the following, also for even $n$:
	\begin{equation}\label{eqn:K-pt-atom}
		\Pbb_{\mu}(S_{j,n}=0) = \binom{n}{n/2}\big(w_j(1-w_j)\big)^{\frac{n}{2}} = \sqrt{\frac{2}{\pi n}}\,\big(4w_j(1-w_j)\big)^{\frac{n}{2}}(1+o(1)).
	\end{equation}
	For the odd case, the events $\{S_{j,n}>0\}$ and $\{S_{j,n}\ge0\}$ coincide, hence conditioning on the last increment gives
	\begin{equation*}
		\Pbb_{\mu}(S_{j,n+1}>0) = w_j\Pbb_{\mu}(S_{j,n}\ge 0)+(1-w_j)\Pbb_{\mu}(S_{j,n}\ge 2) = \Pbb_{\mu}(S_{j,n}\ge0)-(1-w_j)\Pbb_{\mu}(S_{j,n}=0).
	\end{equation*}
	Substituting~\eqref{eqn:K-pt-even} and~\eqref{eqn:K-pt-atom} into these identities and then into~\eqref{eqn:K-pt-exact}, and using $$\big(4w_j(1-w_j)\big)^{\frac{n}{2}}=\big(4w_j(1-w_j)\big)^{\frac{n+1}{2}}\big(4w_j(1-w_j)\big)^{-\frac{1}{2}}$$ together with $\sqrt{n}/\sqrt{n+1}\to1$ in order to express the odd case in terms of $n+1$, we obtain
	\begin{equation*}
	\Pbb_{\mu}(\sigma_j\ge n) = \frac{c_{j,n}}{\sqrt{n}}(1+o(1))\,\big(4w_j(1-w_j)\big)^{\frac{n}{2}},
	\end{equation*}
	where
	\begin{equation}\label{eqn:K-pt-lastpassage}
		c_{j,n}:=\sqrt{\frac{2}{\pi}}\cdot\frac{1}{1-2w_j}\times
		\begin{cases}
			2w_j &\textnormal{ if } n \textnormal{ is even},\\
			\sqrt{\frac{w_j}{1-w_j}} &\textnormal{ if } n \textnormal{ is odd},
		\end{cases}
	\end{equation}
	for all $n\in\Nbb$.
		
	Lastly, we pass from the individual last passage times to the sticking time.
	To do this, write $\mathcal{J}=\{1\le j\le K: w_j=w^{\ast}\}$ as well as
	\begin{equation}\label{eqn:K-pt-lastpassage--star}
		c_{n}^{\ast}:=\sqrt{\frac{2}{\pi}}\cdot\frac{1}{1-2w^{\ast}}\times
		\begin{cases}
			2w^{\ast} &\textnormal{ if } n \textnormal{ is even},\\
			\sqrt{\frac{w^{\ast}}{1-w^{\ast}}} &\textnormal{ if } n \textnormal{ is odd},
		\end{cases}
	\end{equation}
	for all $n\in\Nbb$.
	Then note that, by the union bound and~\eqref{eqn:K-pt-lastpassage}, we have the following as $n\to\infty$:
	\begin{align*}
		\Pbb_{\mu}(T>n)&\le \sum_{1\le j\le K}\Pbb_{\mu}(\sigma_j\ge n) \\
		&= \sum_{1\le j\le K}\frac{c_{j,n}}{\sqrt{n}}(1+o(1))\,\big(4w_j(1-w_j)\big)^{\frac{n}{2}} \\
		&= \sum_{j\in\mathcal{J}}\frac{c_{j,n}}{\sqrt{n}}(1+o(1))\,\big(4w_j(1-w_j)\big)^{\frac{n}{2}} \\
		&= \frac{c_{n}^{\ast}\#\mathcal{J}}{\sqrt{n}}(1+o(1))\,\big(4w^{\ast}(1-w^{\ast})\big)^{\frac{n}{2}},
	\end{align*}
	since all terms with $j\notin\mathcal{J}$ have a strictly faster exponential rate of decay.
	For a matching lower bound, observe that $\mathcal{N}:=\{\ind\{J_i=j\}: 1\le j\le K,\ i\in\Nbb\}$ is a negatively associated collection of random variables (see \cite{JoagDevProschan}) and that, for each fixed $j$, the event $\{\sigma_j\ge n\}$ is an increasing function of the variables $\{\ind\{J_i=j\}: i\in\Nbb\}$; thus, we have
	\begin{equation*}
		\Pbb_{\mu}(\sigma_j\ge n,\ \sigma_{j'}\ge n)\le \Pbb_{\mu}(\sigma_j\ge n)\,\Pbb_{\mu}(\sigma_{j'}\ge n)
	\end{equation*}
	for all $1\le j\neq j'\le K$.
	Combining this with the union bound and with Lemma~\ref{lem:sticking-time-rep}, we get
	\begin{align*}
		\Pbb_{\mu}(T>n)&\ge \sum_{j\in\mathcal{J}}\Pbb_{\mu}(\sigma_j\ge n)-\sum_{\substack{j,j'\in\mathcal{J}\\ j<j'}}\Pbb_{\mu}(\sigma_j\ge n,\ \sigma_{j'}\ge n) \\
		&\ge \sum_{j\in\mathcal{J}}\Pbb_{\mu}(\sigma_j\ge n)-\sum_{\substack{j,j'\in\mathcal{J}\\ j<j'}}\Pbb_{\mu}(\sigma_j\ge n)\,\Pbb_{\mu}(\sigma_{j'}\ge n).
	\end{align*}
	The first sum on the right side is asymptotically equivalent to the upper bound, and the second sum has a strictly faster exponential rate of decay because of the product structure, hence it vanishes.
	
	Putting it all together, we have shown
	\begin{equation*}
		\Pbb_{\mu}(T>n)= \frac{c_{n}^{\ast}\#\mathcal{J}}{\sqrt{n}}(1+o(1))\,\big(4w^{\ast}(1-w^{\ast})\big)^{\frac{n}{2}}
	\end{equation*}
	as $n\to\infty$, where $c_n^{\ast}$ is defined in~\eqref{eqn:K-pt-lastpassage--star}.
	We note that this has a similar form to the conclusion of Theorem~\ref{thm:sticking-time-tail}, except that the upper and lower bounds match, and that the constant pre-factor requires some correction for lattice effects.
\end{example}

	\subsection{Nonparametric Estimation of Future Unsticking}\label{subsec:estim}
	
	This subsection contains our second main result, in which we provide a fully nonparametric procedure for estimating the probability $\Pbb_{\mu}(T>n\,|\,Y_1,\ldots,Y_n)$ when the distribution $\mu$ of $Y_1,Y_2,\ldots$ is unknown, as $n\to\infty$.
	In other words, we aim to estimate the probability that sticking has not yet occurred conditional on the observed data so far.
	We give further motivation for (and applications of) this problem in Section~\ref{sec:BHV}.
	
	In order to construct our estimator, recall from Lemma~\ref{lem:sticking-time-rep} and the union bound that we have
	\begin{equation*}
	\max_{1\le j\le K}\Pbb_{\mu}(\sigma_j\ge n\,|\,Y_1,\ldots, Y_n)\le\Pbb_{\mu}(T>n\,|\,Y_1,\ldots, Y_n)\le \sum_{1\le j\le K}\Pbb_{\mu}(\sigma_j\ge n\,|\,Y_1,\ldots, Y_n),
	\end{equation*}
	and recall also $\Pbb_{\mu}(\sigma_j\ge n\,|\,Y_1,\ldots,Y_n)=h_j(S_{j,n})$, where $h_j(x):=\Pbb_{\mu}(\tau<\infty\,|\,S_{j,0}=x)$ is the probability of returning to the positive half-line, for a RW started at $x\in\Rbb$ whose steps have distribution $\nu_j$.
	Since $S_{j,n}\to-\infty$ almost surely as $n\to\infty$ for all $1\le j\le K$, we may use the Cram\'er-Lundberg theory \eqref{eqn:CR-limit} to approximate $h_j(S_{j,n})\approx e^{\theta_jS_{j,n}}$, hence the estimation of $\Pbb_{\mu}(T>n\,|\,Y_1,\ldots,Y_n)$ depends only on the ability to estimate the Cram\'er roots $\theta_1,\ldots,\theta_K$, which we can do straightforwardly via $Z$-estimation.
	
	An explicit procedure implementing this idea is given in Algorithm~\ref{alg:unstick}, about which we make the following remarks that follow from basic properties of convexity of cumulant generating functions:
		\begin{itemize}
			\item We always have $\textsc{UnstickingProbability}(Y_1,\ldots,Y_n)\in [0,1]$.
			To see this, recall that $F_n\in\S$ implies $\sum_{i=1}^{n}(\Phi_j(Y_i))_1\le 0$ for all $j$.
			Then note that $\sum_{i=1}^{n}(\Phi_j(Y_i))_1\le 0$ implies $\hat{\theta}_j\ge 0$ hence $\hat{\theta}_j\sum_{i=1}^{n}(\Phi_j(Y_i))_1\le 0$. 
			\item If $F_n\notin\S$, then we have  $\textsc{UnstickingProbability}(Y_1,\ldots,Y_n)=1$.
			To see this, recall that $F_n\notin\S$ implies $\sum_{i=1}^{n}(\Phi_j(Y_i))_1> 0$ for exactly one $j$.
			Then note that $\sum_{i=1}^{n}(\Phi_j(Y_i))_1>0$ implies $\hat{\theta}_j=0$ hence $\hat{\theta}_j\sum_{i=1}^{n}(\Phi_j(Y_i))_1= 0$.
		\end{itemize}
		In other words, $\textsc{UnstickingProbability}$ returns a probability with the right qualitative properties.
		(We will discuss the modified procedure $\textsc{UnstickingProbabilityBootstrap}$ below.)

		\begin{algorithm}[h]
			\caption{Procedures for estimating the unsticking probability for Fr\'echet means in open books.
				The first procedure is downwardly-biased because of the maximum, and the second procedure removes this bias with a bootstrap correction.}\label{alg:unstick}
			\begin{algorithmic}[1]
				
				\Procedure{\textsc{UnstickingProbability}}{$Y_1,\ldots, Y_n$}
				
				\State \textbf{input:} samples $Y_1,\ldots, Y_n$ on $\B_{K,m}$
				\State \textbf{output:} estimate of $\Pbb_{\mu}(T>n\,|\,Y_1,\ldots,Y_n)$
				
				\State \,
				
				\For{$j=1,\ldots, K$}
				\State $\hat{\theta}_j\leftarrow \max\{\lambda\in\Rbb: \frac{1}{n}\sum_{i=1}^{n}\exp(\lambda(\Phi_j(Y_i))_1)=1\}$
				\EndFor
				
				\State \textbf{return} $\exp\big(\max_{1\le j\le K}\hat{\theta}_j\sum_{i=1}^{n}(\Phi_{j}(Y_i))_1\big)$
				\EndProcedure
				
				\State \,
				\State \,
				
				\Procedure{\textsc{UnstickingProbabilityBootstrap}}{$Y_1,\ldots,Y_n$; $B$}
				
				\State \textbf{input:} samples $Y_1,\ldots,Y_n$ on $\B_{K,m}$ and number of replications $B$
				\State \textbf{output:} bias-corrected estimate of $\Pbb_{\mu}(T>n\,|\,Y_1,\ldots,Y_n)$
				
				\State \,
				
				\State $\hat p\leftarrow$ \textsc{UnstickingProbability}$(Y_1,\ldots,Y_n)$
				\State $\hat g\leftarrow \tfrac{1}{n}\log\hat p$
				
				\For{$b=1,\ldots,B$}
				\State $Y_1^{(b)},\ldots,Y_n^{(b)}\leftarrow$ sample $n$ points with replacement from $Y_1,\ldots,Y_n$
				\State $g_b\leftarrow \tfrac{1}{n}\log \textsc{UnstickingProbability}(Y_1^{(b)},\ldots,Y_n^{(b)})$
				\EndFor
				
				\State $\widehat{\mathrm{bias}}\leftarrow n(\tfrac{1}{B}\sum_{b=1}^{B} g_b - \hat g)$
				
				\State \textbf{return} $\min(\exp(\log \hat{p}-\widehat{\mathrm{bias}}),\,1)$
				\EndProcedure
			\end{algorithmic}
		\end{algorithm}
	
		Next we give our main consistency result for Algorithm~\ref{alg:unstick}.
		Before we state it, let us comment on the particular notion of consistency we use.
		First, note that we consider the log probability $\log \Pbb_{\mu}(T>n\,|\,Y_1,\ldots,Y_n)$ rather than the probability $\Pbb_{\mu}(T>n\,|\,Y_1,\ldots,Y_n)$ itself; this is because the probability of interest is exponentially small, and indeed our asymptotic theory in the setting where $\mu$ is known already does not identify the exact constant pre-factor.
		Second, note that we consider the relative log-probability rather than the absolute log-probability; this is because the log-probabilities are of order $\Theta(n)$, so this is similar to considering the normalized log-probabilities $n^{-1}\log \Pbb_{\mu}(T>n\,|\,Y_1,\ldots, Y_n)$.
		
		We also need to slightly strengthen condition~\eqref{eqn:exp-moment-condition}, namely by assuming
		\begin{equation}\label{eqn:double-moment-condition}
			\textnormal{for each } 1\le j\le K \textnormal{ we have } \Ebb_{\mu}\big[e^{\lambda(\Phi_j(Y))_1}\big]<\infty\textnormal{ for some }\lambda>2\theta_j \tag{CC$'$}
		\end{equation}
		which will allow us to control the variance of certain exponential expressions.
		Then, we have the following:
			
	\begin{proposition}\label{prop:alg-consistency}
		Under conditions~\eqref{eqn:stickiness-condition}, \eqref{eqn:double-moment-condition}, and~\eqref{eqn:non-lattice-condition}, we have
		\begin{equation*}
			\left|\frac{\log \textsc{UnstickingProbability}(Y_1,\ldots, Y_n)-\log \Pbb_{\mu}(T>n\,|\,Y_1,\ldots, Y_n)}{\log \Pbb_{\mu}(T>n\,|\,Y_1,\ldots, Y_n)}\right| = O_{\Pbb}(n^{-1/2})
		\end{equation*}
		as $n\to\infty$.
	\end{proposition}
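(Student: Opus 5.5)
The plan is to sandwich both logarithms between quantities of the form $\max_j \theta_j S_{j,n}$ (true roots or estimated roots) plus lower-order corrections, and then use that $S_{j,n}$ is of order $-n$ while $\hat\theta_j-\theta_j=O_{\Pbb}(n^{-1/2})$.

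\emph{Target.} By Lemma~\ref{lem:sticking-time-rep}, the union bound, and $\Pbb_{\mu}(\sigma_j\ge n\,|\,Y_1,\ldots,Y_n)=h_j(S_{j,n})$, we have
\begin{equation*}
\max_j h_j(S_{j,n})\le \Pbb_{\mu}(T>n\,|\,Y_1,\ldots,Y_n)\le K\max_j h_j(S_{j,n}).
\end{equation*}
Lundberg's bound~\eqref{eqn:Lundberg-bd} gives $h_j(x)\le e^{\theta_j x}$ for $x\le 0$. For a matching lower bound, I will use the standard Cram\'er--Lundberg representation: change measure by $e^{\theta_j S}$, which turns the walk into a positive-drift walk, so that $h_j(x)=e^{\theta_j x}\,\tilde{\Ebb}_j[e^{-\theta_j(S_\tau)}]$ with overshoot $S_\tau>0$. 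Finiteness of the mean overshoot under the tilt follows from $\Ebb[Xe^{\theta_j X}]<\infty$, which is implied by~\eqref{eqn:double-moment-condition}. Jensen and the renewal theorem for the ladder heights (non-lattice by~\eqref{eqn:non-lattice-condition}) then give $h_j(x)\ge c\,e^{\theta_j x}$ for some $c>0$ and all $x\le 0$; at minimum this holds eventually, which suffices here. Combining these, with probability tending to one,
\begin{equation*}
\log \Pbb_{\mu}(T>n\,|\,Y_1,\ldots,Y_n)=\max_j\theta_jS_{j,n}+O(1).
\end{equation*}

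\emph{Estimator.} The estimator satisfies $\log\textsc{UnstickingProbability}=\max_j\hat\theta_jS_{j,n}$. The difference between the two logarithms is therefore at most
\begin{equation*}
\max_j|\hat\theta_j-\theta_j|\,|S_{j,n}|+O(1).
\end{equation*}
By the SLLN, $S_{j,n}/n\to\Ebb[(\Phi_j(Y))_1]<0$. Hence $|S_{j,n}|=O(n)$, and also $\max_j\theta_jS_{j,n}\le -cn$ for some $c>0$. The denominator is thus of order $n$ in absolute value, and the claim reduces to $\hat\theta_j-\theta_j=O_{\Pbb}(n^{-1/2})$ for each $j$.

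\emph{Root consistency (main obstacle).} This is $Z$-estimation for $\hat M_j(\lambda):=\frac1n\sum_i e^{\lambda X_{i}}-1$, where $X_i=(\Phi_j(Y_i))_1$. The population counterpart is $M_j(\lambda)=e^{\Lambda_j(\lambda)}-1$, with simple root $\theta_j$ and $M_j'(\theta_j)=\Lambda_j'(\theta_j)>0$ by strict convexity, since $\Lambda_j(0)=\Lambda_j(\theta_j)=0$. The steps are as follows.
\begin{enumerate}
\item Condition~\eqref{eqn:double-moment-condition} guarantees $\mathrm{Var}(e^{\lambda X})<\infty$ for $\lambda$ in a neighborhood of $\theta_j$. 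By the CLT, $\hat M_j(\theta_j\pm\epsilon)\to M_j(\theta_j\pm\epsilon)$, which have opposite signs.
\item By convexity of $\hat M_j$ and $\hat M_j(0)=0$, with $\hat M_j'(0)<0$ eventually, the largest root $\hat\theta_j$ lies in $(\theta_j-\epsilon,\theta_j+\epsilon)$ with probability tending to one. This gives consistency.
\item Expand $0=\hat M_j(\hat\theta_j)=\hat M_j(\theta_j)+\hat M_j'(\bar\lambda)(\hat\theta_j-\theta_j)$. Here $\hat M_j(\theta_j)=O_{\Pbb}(n^{-1/2})$ by the CLT.
\item $\hat M_j'(\bar\lambda)\to M_j'(\theta_j)>0$ by a uniform LLN on a compact neighborhood. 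This uses monotonicity and convexity in $\lambda$ together with the dominating moment $|X|e^{\lambda X}$, which is integrable near $\theta_j$ by~\eqref{eqn:double-moment-condition}.
\end{enumerate}
Together these give $\hat\theta_j-\theta_j=O_{\Pbb}(n^{-1/2})$.

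Assembling the pieces, the ratio in the statement is
\begin{equation*}
\frac{O_{\Pbb}(n^{-1/2})\cdot O(n)+O(1)}{\Theta(n)}=O_{\Pbb}(n^{-1/2}).
\end{equation*}
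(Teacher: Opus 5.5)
Your proposal is correct and follows essentially the same route as the paper. The shared steps are: the union-bound sandwich reducing $\log \Pbb_{\mu}(T>n\,|\,Y_1,\ldots,Y_n)$ to $\max_j\theta_jS_{j,n}+O(1)$, where your tilted overshoot representation of $h_j$ is just the Cram\'er--Lundberg approximation the paper cites; the SLLN for the $\Theta(n)$ denominator; and $\sqrt{n}$-consistency of $\hat\theta_j$, which you prove by hand via convexity and a mean-value expansion where the paper instead verifies the hypotheses of \cite[Theorem~5.21]{vanDerVaart}.

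One small correction: $\Ebb_{\mu}[Xe^{\theta_jX}]<\infty$ only gives a finite mean ladder height, not a uniformly bounded mean overshoot; that would need a tilted second moment, which \eqref{eqn:double-moment-condition} does supply. You can drop Jensen entirely, though, since weak convergence of the overshoot to its stationary law already shows that $\tilde{\Ebb}_j[e^{-\theta_jS_\tau}]$ converges to a positive constant as $x\to-\infty$.
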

	
	\begin{proof}
		For the sake of brevity throughout, we write $p_n:=\Pbb_{\mu}(T>n\mid Y_1,\dots,Y_n)$ and $\hat p_n:=\textsc{UnstickingProbability}(Y_1,\dots,Y_n)=\exp\!\big(\max_{1\le j\le K}\hat\theta_j S_{j,n}\big)$ so that the claim is equivalent to $|(\log\hat p_n-\log p_n)/\log p_n|=O_{\Pbb}(n^{-1/2})$.
		
		The first step is to control $\log p_n$ in terms of the Cram\'er roots $\theta_1,\ldots, \theta_K$.
		By Lemma~\ref{lem:sticking-time-rep}, the union bound, and the definition of $h_j(S_{j,n})=\Pbb_{\mu}(\sigma_j\ge n\mid Y_1,\dots,Y_n)$, we have
		\begin{equation}\label{eq:prop-sandwich}
			\max_{1\le j\le K}\log h_j(S_{j,n})\ \le\ \log p_n\ \le\ \max_{1\le j\le K}\log h_j(S_{j,n})+\log K.
		\end{equation}
		Under condition~\eqref{eqn:stickiness-condition} each folded walk has negative drift, hence $S_{j,n}\to-\infty$ almost surely by the strong law of large numbers; consequently, the Cram\'er-Lundberg asymptotic~\eqref{eqn:CR-limit} and equation~\eqref{eq:prop-sandwich} together imply
		\begin{equation}\label{eqn:pn-Thetan}
			\frac1n\log p_n\to -\min_{1\le j\le K}\theta_j\big(-\Lambda_j'(0)\big)<0,
		\end{equation}
		almost surely, hence $\log p_n=\Theta_{\Pbb}(n)$.
		Moreover, Lundberg's bound~\eqref{eqn:Lundberg-bd} together with the sharpened Cram\'er-Lundberg approximation \cite{CramerLundberg}, which asserts that $h_j(x)e^{-\theta_j x}$ stays bounded as $x\to-\infty$, guarantees that there exist $A_j>0$ and $x_0<0$ such that $x\le x_0$ implies
		\begin{equation*}
			\max_{1\le j\le K}\big|\log h_j(x)-\theta_j x\big|\le \max_{1\le j\le K}A_j.
		\end{equation*}
		Since $S_{j,n}\to-\infty$ almost surely by the strong law of large numbers, this implies
		\begin{equation}\label{eq:prop-lund}
			\max_{1\le j\le K}\big|\log h_j(S_{j,n})-\theta_j S_{j,n}\big|\le \max_{1\le j\le K}A_j
		\end{equation}
		for sufficiently large $n\in\Nbb$ almost surely.

				Next, we establish $\sqrt n$-consistency of the estimators $\hat{\theta}_1,\ldots,\hat{\theta}_K$ of $\theta_1,\ldots,\theta_K$.
		Fix $1\le j\le K$, let $\lambda_j^{\ast}>2\theta_j$ satisfy $\Ebb_{\mu}\big[e^{\lambda_j^{\ast}(\Phi_j(Y))_1}\big]<\infty$ as in condition~\eqref{eqn:double-moment-condition}, and note that $\hat\theta_j$ is a $Z$-estimator solving $M_{j,n}(\lambda):=\frac1n\sum_{i=1}^{n}\psi_{j,\lambda}(Y_i)=0$ for $\psi_{j,\lambda}(y):=e^{\lambda(\Phi_j(y))_1}-1$, whose population counterpart is $M_j(\lambda) := \Ebb_{\mu}\big[\psi_{j,\lambda}(Y)\big]=e^{\Lambda_j(\lambda)}-1$.
		Since $e^{\lambda x}\le 1+e^{\lambda_j^{\ast}x}$ for all $\lambda\in[0,\lambda_j^{\ast}]$ and $x\in\Rbb$, the function $\Lambda_j$ is finite on $[0,\lambda_j^{\ast}]$, hence smooth on $(0,\lambda_j^{\ast})$.
		To show the desired result, we just need to verify the hypotheses of \cite[Theorem~5.21]{vanDerVaart}.
		First, note that we have $\hat\theta_j\to\theta_j$ almost surely from the fact that $M_{j}$ is strictly convex.
		Second, $\lambda\mapsto\psi_{j,\lambda}(y)$ is Lipschitz near $\theta_j$ with a square-integrable Lipschitz constant.
		Indeed, fix $\varepsilon\in(0,\theta_j)$ with $\eta:=\lambda_j^{\ast}-2(\theta_j+\varepsilon)>0$, and write $x:=(\Phi_j(y))_1$ and $x_+:=\max\{x,0\}$.
		Since $\partial_{\lambda}\psi_{j,\lambda}(y)=xe^{\lambda x}$ and $\sup_{x\le 0}|x|e^{ax}=1/(ea)$ for $a>0$, we get
		\begin{equation*}
			\sup_{|\lambda-\theta_j|\le\varepsilon}\big|\partial_{\lambda}\psi_{j,\lambda}(y)\big|\le L_j(y):=\frac{1}{e(\theta_j-\varepsilon)}+x_+e^{(\theta_j+\varepsilon)x_+},
		\end{equation*}
		so the mean value theorem gives $|\psi_{j,\lambda}(y)-\psi_{j,\lambda'}(y)|\le L_j(y)|\lambda-\lambda'|$ for all $\lambda,\lambda'\in[\theta_j-\varepsilon,\theta_j+\varepsilon]$.
		Moreover, $\sup_{x\ge0}x^2e^{-\eta x}=4/(e\eta)^2$ implies $x_+^2e^{2(\theta_j+\varepsilon)x_+}\le 4(e\eta)^{-2}e^{\lambda_j^{\ast}x_+}\le 4(e\eta)^{-2}(1+e^{\lambda_j^{\ast}x})$, hence $L_j\in L^2(\Pbb_{\mu})$ by condition~\eqref{eqn:double-moment-condition}.
						
		Now we put all the pieces together.
		By the triangle inequality and \eqref{eq:prop-sandwich} we can make the following bound, which we analyze via \eqref{eqn:pn-Thetan} and \eqref{eq:prop-lund}:
		\begin{align*}
			&\frac{\big|\log \hat{p}_n -\log p_n\big|}{|\log p_n|} \\
			&\qquad\le \frac{\max_{1\le j\le K}\big|\hat{\theta}_j -\theta_j\big|\max_{1\le j\le K}|S_{j,n}|}{|\log p_n|}
			 + \frac{\max_{1\le j\le K}\big|\theta_j S_{j,n}-\log h_j(S_{j,n})\big|}{|\log p_n|} + \frac{\log K}{|\log p_n|}\\
			 &\qquad= \frac{O_{\Pbb}(n^{-1/2})\,\Theta_{\Pbb}(n)}{\Theta_{\Pbb}(n)}
			 + \frac{O_{\Pbb}(1)}{\Theta_{\Pbb}(n)} + \frac{O_{\Pbb}(1)}{\Theta_{\Pbb}(n)}.
		\end{align*}
		The right side is $O_{\Pbb}(n^{-1/2})$ as desired, and this finishes the proof.
	\end{proof}
	
	While this result establishes an asymptotic guarantee for the relative error of the procedure $\textsc{UnstickingProbability}(Y_1,\ldots, Y_n)$ with respect to $\Pbb_{\mu}(T>n\,|\,Y_1,\ldots,Y_n)$ as $n\to\infty$, there is typically substantial negative bias in practice because of the maximum appearing in Algorithm~\ref{alg:unstick}.
	A standard way to correct for this is by using the bootstrap to estimate this bias and then subtract it from the resulting estimate, and this leads us to the additional procedure $\textsc{UnstickingProbabilityBootstrap}(Y_1,\ldots, Y_n)$ in Algorithm~\ref{alg:unstick}.
	We do not provide additional theory for the bias-corrected procedure, but we illustrate it with a basic Monte Carlo simulation:
	for each sample size $n\in\{10^2,10^3,10^4\}$, we consider $B=300$ trials of Example~\ref{ex:radial} (i.e., a radial exponential distribution on the 3-spider $\B_{3,1}$) for which we can exactly calculate $p_n:=\Pbb_{\mu}(T>n\,|\,Y_1,\ldots, Y_n)$.
	We show the results in Figure~\ref{fig:bootstrap} which demonstrate that bootstrap bias-correction indeed decreases the bias at the cost of slightly increasing the variance.
	
	\begin{figure}[t]
		\centering
		\includegraphics[scale=0.5]{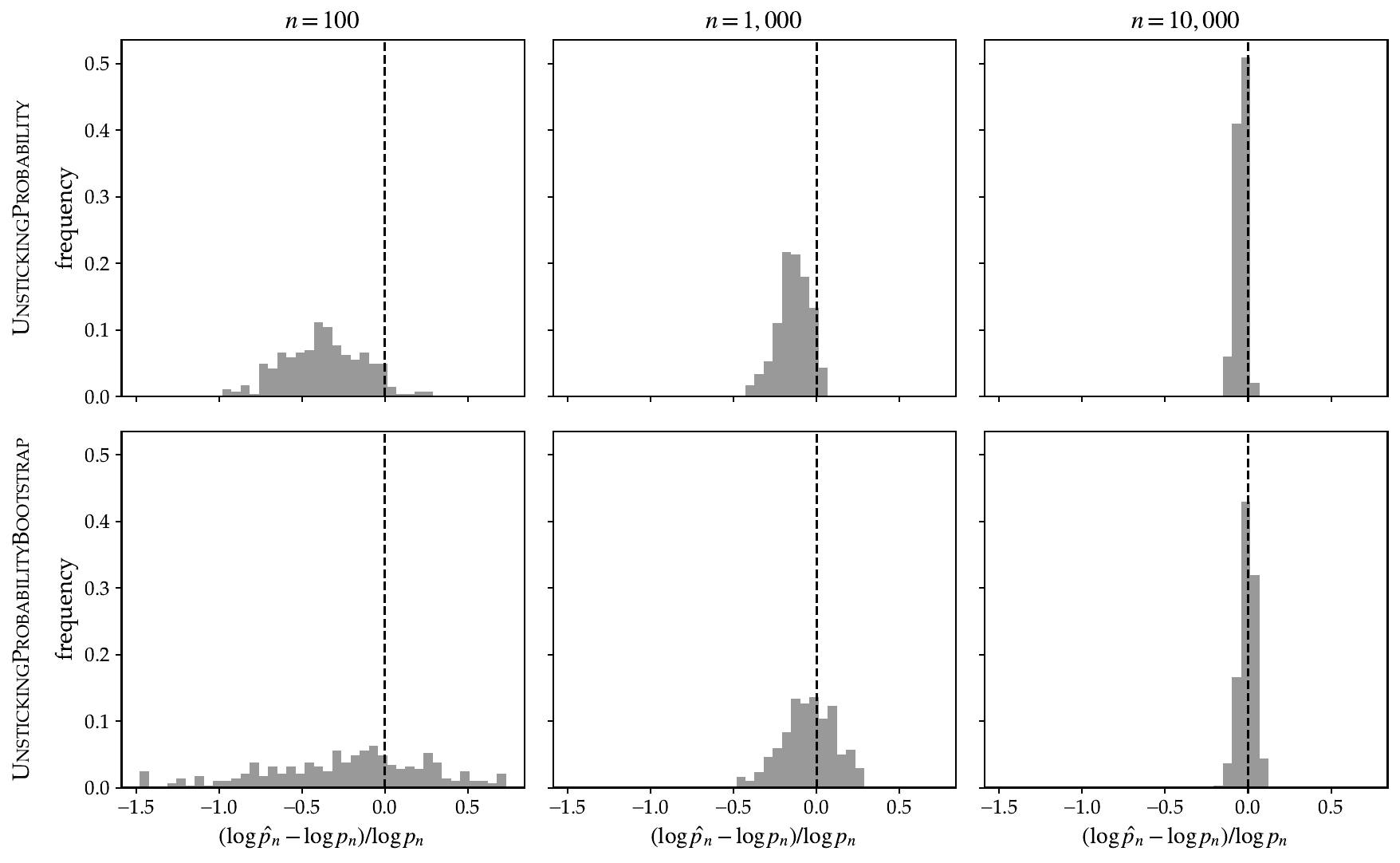}
		\caption{Estimating the unsticking probability of Fr\'echet means using the procedures of Algorithm~\ref{alg:unstick}.
		For $n\in\{10^2,10^3,10^4\}$ samples, we show a histogram of relative errors of $\textsc{UnstickingProbability}$ (top row) and of $\textsc{UnstickingProbabilityBootstrap}$ (bottom row) with respect to the ground truth from Example~\ref{ex:radial}.}
		\label{fig:bootstrap}
	\end{figure}
	
	\section{Applications in Billera-Holmes-Vogtmann Treespace}\label{sec:BHV}
	
	In this section, we extend the methodology of Subsection~\ref{subsec:estim} to the case of Billera-Holmes-Vogtmann (BHV) treespaces, which we then apply to a concrete biological problem involving the phylogeny of tree shrews.
	More precisely, in Subsection~\ref{subsec:BHV-def} we introduce the BHV geometry, in Subsection~\ref{subsec:BHV-methods} we define our methods of interest, and in Subsection~\ref{subsec:appl-phylogen} we apply our methods to the publicly-available data of \cite{SongMammals}.	
	We do not have any theorems in this section; rather, we develop straightforward extensions of the theory of the previous section in order to guide our methodology.
	
	\subsection{Basic Definitions}\label{subsec:BHV-def}
	
	First, we introduce the basic geometric principles of the Billera-Holmes-Vogtmann (BHV) treespace which are needed in order to extend our previous results from the setting of open books.
	We primarily follow the notation from \cite{BardenLe}, but we direct the reader to \cite{BHV} for further detail.
	
	Fix an integer $N\ge3$ and write $[N]=\{1,\dots,N\}$, which we interpret as the set of \emph{leaves}.
	A \emph{clade} is a subset $A\subseteq[N]$ with $2\le|A|\le N-1$, and two clades are called \emph{compatible} if one contains the other or if they are disjoint.
	A \emph{topology} is a set of pairwise compatible clades; a topology is called \emph{binary} or \textit{fully resolved} if every internal node is a \textit{bifurcation}, i.e., has exactly two children, in which case the topology is maximal and has exactly $N-2$ clades.
	Otherwise, the topology is called \textit{unresolved}, meaning that it has at least one internal node with three or more children, called a \textit{multifurcation}; an internal node with exactly three children is called a \textit{trifurcation}.
	
	For each topology $\tau$, we write $\mathcal{O}_\tau:= [0,\infty)^{\tau}$ for the set of all functions from $\tau$ to $[0,\infty)$, which can be interpreted as assigning a nonnegative length to each internal edge of a rooted tree whose topology is $\tau$.
	We define $\TT_N$ to be the set $\bigsqcup_{\textnormal{topology }\tau}(\{\tau\}\times\mathcal{O}_\tau)$
	with the points $(\tau,\ell)$ and $(\tau',\ell')$ identified whenever
	\begin{equation*}
	\{A\in\tau:\ell(A)>0\}=\{A\in\tau':\ell'(A)>0\}\qquad\text{and}\qquad \ell(A)=\ell'(A)\ \text{ for every such } A,
	\end{equation*}
	that is, whenever the two length functions agree on the values assigned to all positive-length clades; note that setting an edge length to zero has the effect of contracting that edge, merging its two endpoints into a single node of higher degree and hence turning a bifurcation into a trifurcation.
	We write $o\in\TT_N$ for the image of $0\in[0,\infty)^{\tau}$ under this identification, which corresponds to the tree with all edge lengths equal to zero.
	We write $y=(\tau,\ell)$ for a tree in $\TT_N$, which parallels our notation $y=(j,v)$ for points in the open book $\B_{K,m}$.
	There is a natural notion of geodesic distance in $\TT_N$ which leads to a metric $d:\TT_N\times\TT_N\to [0,\infty)$ that makes $(\TT_N,d)$ into a complete metric space of nonpositive Alexandrov curvature \cite[Lemma~4.1]{BHV}, although we do not give the precise formulas here.
	Lastly, we write $\S_N\subseteq\TT_N$ for the set of all trees with at least one internal edge of length zero, i.e., the trees with an unresolved topology, containing at least one multifurcation; the set $\S_N$ plays the role of the spine from the case of the open books, although we emphasize that $(\S_N,d)$ is itself a non-Euclidean metric space, unlike the case of the open book $\B_{K,m}$ where the spine is (isometric to) $\Rbb^{m-1}$.
	
	Next, we introduce some finer geometric structure of $\S_N$.
	An unresolved topology $\sigma$ is called \textit{codimension-one} if it has exactly $N-3$ clades, which corresponds to $\sigma$ having exactly one multifurcation and that multifurcation being a trifurcation; in this case there are exactly three binary topologies which are at least as fine as $\sigma$, called \textit{resolutions} of $\sigma$, and the set of all resolutions of $\sigma$ is denoted $R(\sigma)$.
	Each resolution corresponds to adding exactly one internal edge of positive length, called the \textit{resolving edge} of the resolution, which splits the trifurcation into two bifurcations.
	When $\sigma$ is codimension-one, the three orthants $\{\mathcal{O}_\tau\}_{\tau\in R(\sigma)}$ meet pairwise along the suborthant $\mathcal{O}_\sigma\cong[0,\infty)^{N-3}$, so their union $\B_{3,N-2}(\sigma):=\bigcup_{\tau\in R(\sigma)}\mathcal{O}_\tau$ is isometric to a geodesically convex subset of the open book $\B_{3,N-2}$ (namely, the subset in which the spine coordinates are restricted to the orthant $[0,\infty)^{N-3}$); for each resolution $\tau\in R(\sigma)$, we define the folding map $\Phi_{\tau}:\B_{3,N-2}(\sigma)\to \Rbb\times \mathcal{O}_{\sigma}$ to record the signed length of the resolving edge together with the lengths of the edges of $\sigma$.
	In order to lighten the notation, we usually write the codimension-one topology $\sigma$ as a superscript and we abbreviate its resolutions in terms of the leaf sets $A,B$, and $C$ of the three subtrees meeting at the trifurcation, i.e., if an unresolved topology $\sigma$ involves a trifurcation at $\{A,B,C\}$, then $\mathcal{O}^{\sigma}_{\{A,\{B,C\}\}}$ denotes the orthant whose topology has resolving edge separating $A$ from $B\cup C$.
	More generally, one can define $(\Phi_\tau(Y))_1$ for any $Y\in\TT_N$ as the signed rate at which the geodesic from $\mathcal{O}_\sigma$ to $Y$ changes the length of the resolving edge of $\tau$; we do not precisely describe this general folding map here, and instead refer the reader to \cite{BardenOwenLe} for detail.
	See Figure~\ref{fig:BHV-folding} for an illustration of these folding maps.
	
		\begin{figure}
		\begin{tikzpicture}
		
		\def\shift{5}
		\def\ts{0.45}
		
		\filldraw[thin, opacity=0.65, color=black!30] (-\shift, 0) to (-\shift, 3) to (-1.75-\shift,3) to (-1.75-\shift, 0) to cycle;
		\draw[thin, color=black] (-\shift, 0) to (-\shift, 3) to (-1.75-\shift,3) to (-1.75-\shift, 0) to cycle;
		\filldraw[thin, opacity=0.65, color=black!30] (-\shift, 0) to (-\shift, 3) to (1.5-\shift,4) to (1.5-\shift, 1) to cycle;
		\draw[thin, color=black] (-\shift, 0) to (-\shift, 3) to (1.5-\shift,4) to (1.5-\shift, 1) to cycle;
		\treeBeta{(0.3-\shift,1.75)}{\ts}{pythonorange}
		\filldraw[thin, opacity=0.65, color=black!30] (-\shift, 0) to (-\shift, 3) to (1.25-\shift,2) to (1.25-\shift, -1) to cycle;
		\draw[thin, color=black] (-\shift, 0) to (-\shift, 3) to (1.25-\shift,2) to (1.25-\shift, -1) to cycle;
		
		\node at (-0.5-\shift,3.5) {$\TT_{5}$};
		\node at (-2.6-\shift,0.5) {\footnotesize $\mathcal{O}^{\sigma}_{\tiny \{\{a,b\},c\}}$};
		\node at (2.1-\shift,0) {\footnotesize $\mathcal{O}^{\sigma}_{\tiny \{a,\{b,c\}\}}$};
		\node at (2.35-\shift,1.5) {\footnotesize $\mathcal{O}^{\sigma}_{\tiny \{\{a,c\},b\}}$};
		
		\treeGamma{(0.2-\shift,0.3)}{\ts}{pythongreen}
		\treeAlpha{(-1.5-\shift,0.3)}{\ts}{pythonblue}
		
		\treeSigma{(-5.5,-1.35)}{\ts}{black}
		\node[left,font=\small] at (-5.6,-0.945) {$\sigma$};
		
		\filldraw[thin, opacity=0.65, color=black!30] (0, 0) to (0, 3) to (-1,4) to (-1, 1) to cycle;
		\draw[thin, color=black] (0, 0) to (0, 3) to (-1,4) to (-1, 1) to cycle;
		\filldraw[thin, opacity=0.65, color=black!30] (0, 0) to (0, 3) to (1.5,2.5) to (1.5, -0.5) to cycle;
		\draw[thin, color=black] (0, 0) to (0, 3) to (1.5,2.5) to (1.5, -0.5) to cycle;
		\treeBeta{(-0.9,1.75)}{\ts}{pythonorange}
		\filldraw[thin, opacity=0.65, color=black!30] (0, 0) to (0, 3) to (-1.75,3) to (-1.75, 0) to cycle;
		\draw[thin, color=black] (0, 0) to (0, 3) to (-1.75,3) to (-1.75, 0) to cycle;
		\node[color=white] at (0,-0.75) {$\mathcal{B}_{3,3}$};
		
		\treeGamma{(0.2,0.4)}{\ts}{pythongreen}
		\treeAlpha{(-1.5,0.3)}{\ts}{pythonblue}

		\filldraw[thin, opacity=0.65, color=black!30] (\shift, 0) to (\shift, 3) to (-1.75+\shift,3) to (-1.75+\shift, 0) to cycle;
		\draw[thin, color=black] (\shift, 0) to (\shift, 3) to (-1.75+\shift,3) to (-1.75+\shift, 0) to cycle;
		\filldraw[thin, opacity=0.65, color=black!30] (\shift, 0) to (\shift, 3) to (1.75+\shift,3) to (1.75+\shift, 0) to cycle;
		\draw[thin, color=black] (\shift, 0) to (\shift, 3) to (1.75+\shift,3) to (1.75+\shift, 0) to cycle;
		\node[color=white] at (\shift,-0.75) {$\mathcal{B}_{3,3}$};
		\node at (-1.7+\shift,3.35) {\footnotesize$(-\infty,0]\times\mathcal{O}_{\sigma}$};
		\node at (1.6+\shift,3.35) {\footnotesize $[0,\infty)\times\mathcal{O}_{\sigma}$};
		
		\treeAlpha{(-1.5+\shift,0.3)}{\ts}{pythongreen}
		\treeGamma{(0.2+\shift,0.5)}{\ts}{pythongreen}
		\treeBeta{(-1.25+\shift,1.5)}{\ts}{pythongreen}
		
		\draw [thick, -stealth, color=pythongreen] (-3, -0.5) to [out=-30,in=210] (3,-0.5);
		\node[color=pythongreen] at (0, -1) {$\Phi^{\sigma}_{\{a,\{b,c\}\}}$};
	\end{tikzpicture}
	\caption{Visualization of the BHV treespace and the folding map.
		We show (left) the BHV treespace on $N=5$ leaves, locally near the codimension-one topology $\sigma=\{\{a,b,c\},\{d,e\}\}$ which is represented by the black tree whose trifurcation is emphasized by the $\ast$ marker.
		Then, we show (center) the folding map $\Phi_{\{a,\{b,c\}\}}^{\sigma}$ which (right) identifies $\mathcal{O}^{\sigma}_{\{a,\{b,c\}\}}$ with $[0,\infty)\times\mathcal{O}_{\sigma}$ and identifies both $\mathcal{O}^{\sigma}_{\{\{a,b\},c\}}$ and $\mathcal{O}^{\sigma}_{\{\{a,c\},b\}}$ with $(-\infty,0]\times\mathcal{O}_{\sigma}$.}
	\label{fig:BHV-folding}
	\end{figure}

	Fr\'echet means in the metric space $(\TT_N,d)$ are defined similarly as in the case of open books.
	That is, suppose $\mu$ is a probability measure on $\TT_N$ satisfying $\Ebb_{\mu}[d(Y,o)]<\infty$, and let $Y_1,Y_2,\ldots$ denote i.i.d. samples from $\mu$.
	Then,
	\begin{equation*}
		F_{\mu} := \underset{x\in\TT_N}{\arg\min}\,\Ebb_{\mu}\left[d^2(x,Y)-d^2(o,Y)\right]
	\end{equation*}
	exists and is unique, and 
	\begin{equation*}
		F_n := \underset{x\in\TT_N}{\arg\min}\,\frac{1}{n}\sum_{i=1}^{n}d^2(x,Y_i),
	\end{equation*}
	which exists and is unique for all $n\in\Nbb$ almost surely, and we have $\Pbb_{\mu}(F_n\to F_{\mu} \textnormal{ as }n\to\infty)=1$.
	Then, a well-known sufficient condition for stickiness (e.g., \cite[Theorem~3]{BardenLe} or \cite[Section~4]{BardenOwenLe}) is
	\begin{equation}\label{eqn:codim1-stickiness}\tag{$\mathrm S_\sigma$}
		\max_{\tau\in R(\sigma)}\;\Ebb_\mu\left[(\Phi_\tau(Y))_1\right]<0,
	\end{equation}
	i.e., \eqref{eqn:codim1-stickiness} implies $\Pbb_{\mu}(F_n\in \S_N \textnormal{ for sufficiently large }n\in\Nbb)=1$.
	In words, condition~\eqref{eqn:codim1-stickiness} states that the signed length of the resolving edge of $\Phi_{\tau}(Y)$ is negative on average, for all $\tau\in R(\sigma)$.
	As in the case of open books, we write $T:=\inf\{n\in\Nbb: F_m\in\S_N \textnormal{ for all } m\ge n\}$ for the time at which stickiness occurs.
	
	We also revisit the simulation of Figure~\ref{fig:treespace-sim} to explain the sticking time in terms of the collection of folded random walks.
	In this simulation, the trees $Y_1,Y_2,\ldots$ are generated by first sampling a tree topology and then by assigning independent exponential weights to each edge, for a given distribution over topologies and a given set of expected edge lengths.
	The Fr\'echet mean tree $F_n$ has an unresolved topology with a trifurcation at $\{a,b,c\}$, and the three folded random walks correspond to the three possible resolutions of this trifurcation.
	
	\subsection{Description of Methods}\label{subsec:BHV-methods}

In this subsection, we describe extensions of our statistical methodologies from open books to the BHV treespace, which we will use in our phylogenetic application.
For context, we note that there is by now a mature theory for numerical and combinatorial algorithms in BHV treespace \cite{Owen, OwenProvan, MillerOwenProvan, FeragenTreeStats}, and that we will utilize many of these procedures in our methods.

To begin, we give some statistical motivation for estimating the quantity $\Pbb_{\mu}(T>n\,|\,Y_1,\ldots,Y_n)$.
Indeed, suppose that the sample Fr\'echet mean tree $F_n$ has a codimension-one topology $\sigma$, meaning it has a single multifurcation which is a trifurcation; in particular, $F_n\in\mathcal{O}_\sigma\subseteq\S_N$.
Since $T$ is the time after which the sample Fr\'echet mean remains in $\S_N$ forever, the event $\{T>n\}$ is exactly the event that $F_m\notin\S_N$ for some $m\ge n$, i.e., that the observed trifurcation will be resolved at some future time as more gene trees are collected.
As such, the quantity $\Pbb_{\mu}(T>n\,|\,Y_1,\ldots,Y_n)$ is the conditional probability, given the gene trees observed so far, that the trifurcation at $\sigma$ will resolve in the future.
In particular, a value close to $1$ is evidence that the observed trifurcation is merely an artifact of finite sample size (a soft polytomy), while a small value is evidence that it reflects a genuine multifurcation in the population Fr\'echet mean tree (a hard polytomy).
(One could make this interpretation precise by phrasing it as a hypothesis testing problem, with null hypothesis $H_0$ that $F_\mu$ is resolved and alternative $H_\sigma$ that $F_\mu$ has a trifurcation at $\sigma$; we will not pursue this formalization here.)

The estimation procedure is given in Algorithm~\ref{alg:unstick-bhv} which closely parallels Algorithm~\ref{alg:unstick} from the open book setting.
Given gene trees $Y_1,\ldots,Y_n\in\TT_N$ and a codimension-one topology $\sigma$ of interest, we first apply the folding maps $\Phi_\tau$, $\tau\in R(\sigma)$, from Subsection~\ref{subsec:BHV-def} to obtain, for each of the three resolutions $\tau\in R(\sigma)$, a real-valued sequence $(\Phi_\tau(Y_1))_1,\ldots,(\Phi_\tau(Y_n))_1$ recording the signed length of the corresponding resolving edge.
From here, the procedure is identical to \textsc{UnstickingProbability}: we compute the Cram\'er root $\hat\theta_\tau$ associated to each folded sequence, and return $\exp\big(\max_{\tau\in R(\sigma)}\hat\theta_\tau\sum_{i=1}^n(\Phi_\tau(Y_i))_1\big)$ as our estimate of $\Pbb_{\mu}(T>n\,|\,Y_1,\ldots,Y_n)$.
The only substantive difference from the open book case is combinatorial rather than statistical: computing $(\Phi_\tau(Y_i))_1$ for a general gene tree $Y_i\in\TT_N$ requires locating the geodesic from $Y_i$ to $\mathcal{O}_\sigma$ in $\TT_N$, which is a nontrivial computation in BHV treespace (unlike in an open book, where the folding maps are given by an explicit closed-form projection); we use the algorithm of \cite{OwenProvan} for this step.

As before, this estimator suffers from a systematic bias due to the maximum over $\tau\in R(\sigma)$, and we correct for it using the same bootstrap idea as in \textsc{UnstickingProbabilityBootstrap}: we resample the gene trees $Y_1,\ldots,Y_n$ with replacement, recompute the estimate on each resampled dataset, and subtract off the resulting estimate of the bias.
We use this bootstrap-corrected procedure in our phylogenetic application below.
			\begin{algorithm}[h!]
		\caption{Procedures for estimating the unsticking probability for Fr\'echet means in BHV treespace, which are analogous to those of Algorithm~\ref{alg:unstick}.}
		\label{alg:unstick-bhv}
		\begin{algorithmic}[1]
			
			\Procedure{\textsc{UnstickingProbability}}{$Y_1,\ldots,Y_n;\ \sigma$}
			
			\State \textbf{input:} samples $Y_1,\ldots,Y_n\in\TT_N$ and a codimension-one topology $\sigma$, with trifurcation at
			\State \hphantom{\textbf{input:}}\  $\{A,B,C\}$ and resolutions $R(\sigma)=\{\{\{A,B\},C\},\{A,\{B,C\}\},\{\{A,C\},B\}\}$
			\State \textbf{output:} estimate of $\Pbb_{\mu}(T>n\,|\, Y_1,\ldots,Y_n)$
			
			\State \,
		
			\For{$\tau\in R(\sigma)$}
				\State $\hat{\theta}_{\tau}\leftarrow \max\{\lambda\in\Rbb: \frac{1}{n}\sum_{i=1}^{n}\exp(\lambda(\Phi_{\tau}(Y_i))_1)=1\}$
				\EndFor
				
				\State \textbf{return} $\exp\big(\max_{\tau\in R(\sigma)}\hat{\theta}_{\tau}\sum_{i=1}^{n}(\Phi_{\tau}(Y_i))_1\big)$

			\EndProcedure
			
			\State \,
			\State \,
			
			\Procedure{\textsc{UnstickingProbabilityBootstrap}}{$Y_1,\ldots,Y_n;\ \sigma,\ B$}
			
			\State \textbf{input:} samples $Y_1,\ldots,Y_n\in\TT_N$, a codimension-one topology $\sigma$ as
			\State \hphantom{\textbf{input:}}\, above, and a number of replications $B$
			\State \textbf{output:} bias-corrected estimate of $\Pbb_{\mu}(T>n\,|\, Y_1,\ldots,Y_n)$
			
			\State \,
			
			\State $\hat p\leftarrow$ \textsc{UnstickingProbability}$(Y_1,\ldots,Y_n;\ \sigma)$
			\State $\hat g\leftarrow \tfrac{1}{n}\log\hat p$
			
			\For{$b=1,\ldots,B$}
			\State $Y_1^{(b)},\ldots,Y_n^{(b)}\leftarrow$ sample $n$ trees with replacement from $Y_1,\ldots,Y_n$
			\State $g_b\leftarrow \tfrac{1}{n}\log \textsc{UnstickingProbability}(Y_1^{(b)},\ldots,Y_n^{(b)};\ \sigma)$
			\EndFor
			\State $\widehat{\mathrm{bias}}\leftarrow n(\tfrac{1}{B}\sum_{b=1}^{B} g_b - \hat g)$
			
			\State \textbf{return} $\min(\exp(\log \hat p-\widehat{\mathrm{bias}}),\,1)$
			\EndProcedure
		\end{algorithmic}
	\end{algorithm}
	
	\subsection{Phylogeny of Tree Shrews}\label{subsec:appl-phylogen}
	
	Finally, we consider applying our methodology to a concrete biological problem of determining the phylogeny of a collection of eutherian mammals, and in particular determining the phylogeny of the tree shrew relative to other eutherian mammals.
	
	Our data come from \cite{SongMammals}, as processed by \cite{MirarabAstral}, and consists of $n=424$ phylogenetic trees on $N=14$ taxa. The 14 taxa comprise nine primates together with tree shrew, rat, rabbit, horse, and sloth. Each of the 424 trees is a gene tree (i.e., a phylogenetic tree inferred from the sequence alignment of a particular genomic locus across the 14 taxa). Owing to both statistical error and biological processes (see, e.g., \cite{Maddison,DegnanRosenberg}), different loci may yield distinct inferred gene trees, including trees with different topologies. In this setting, Fr\'echet means in BHV treespace provide a method for aggregating the collection of gene trees into a single representative tree, which may be interpreted as an estimate of the underlying species tree, i.e., the evolutionary relationships among the taxa.
		
	Figure~\ref{fig:primate_mean} shows the Fr\'echet mean tree of all of the data.
	Broadly speaking, its topology reflects well-known evolutionary groups (e.g., the great apes $\{\{\{\text{Human},\text{Chimpanzee}\},\text{Gorilla}\},\text{Orangutan}\}$ within a broader clade of primates).
	However, as we discussed in the introduction, there is a trifurcation $\{\text{Tree Shrew},\text{Glires},\text{Primates}\}$ which may be interesting from a biological point of view; there is some debate (see \cite{TreeShrewsPosition,LinShrews}) on the exact taxonomy of tree shrews, which are known to be closely related to both glires and primates.
	
	Applying our methods to this dataset yields the following.
	First, we may visualize this trifurcation using the folded random walks, as in Figure~\ref{fig:primate_RW}; all three folded walks are negative at the endpoint (as required by the trifurcation), but the one corresponding to the resolution $\{\{\textnormal{Tree Shrew},\textnormal{Glires}\},\textnormal{Primates}\}$ is closer to zero than the others, which suggests it has a higher probability of returning to the positive half-line.
	Second, we may apply the procedures of Algorithm~\ref{alg:unstick-bhv} to gain a quantitative understanding of the probability of future bifurcation; more concretely, the \textsc{UnstickingProbabilityBootstrap} procedure from Algorithm~\ref{alg:unstick-bhv} returns a value of 0.02795, which (recall the logarithmic scaling in Proposition~\ref{prop:alg-consistency}) implies that the probability of future bifurcation is on the order of $10^{-2}$.

	\begin{figure}[t]
		\centering
		\includegraphics[scale=0.75]{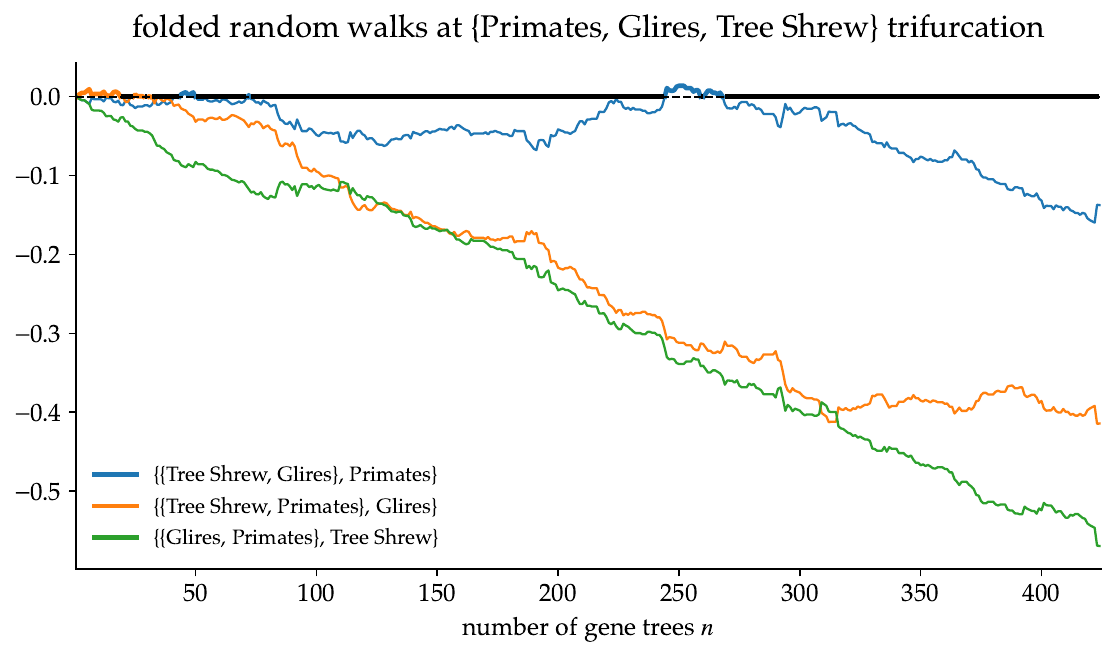}
		\caption{The three folded random walks for the gene trees of eutherian mammals, at the \{Primates, Glires, Tree Shrew\} trifurcation.
		All three random walks are negative at the endpoint, which reflects the trifurcation in Figure~\ref{fig:primate_mean}.
		The random walk closest to zero is the one in which tree shrews are more closely related to glires than primates.}
		\label{fig:primate_RW}
	\end{figure}

	\subsection*{Acknowledgements}
	
	Claude Opus 4.8 was used for various tasks in the preparation of this manuscript, e.g., writing \texttt{tikz} code, proofreading language and mathematical details, and locating open-source data sets.
	Additionally, Claude Fable 5 was used to translate Megan Owen's software packages for BHV treespace algorithms (available at \url{http://comet.lehman.cuny.edu/owen/code.html}) from \texttt{Java} to \texttt{Python}.
	All ideas and arguments are due to the author, and the author takes responsibility for the correctness of all results and methods.

	\bibliography{refs}
	\bibliographystyle{alpha}
	
\end{document}